\def\thefigure{\thesection.\@arabic\c@figure}
\def\fps@figure{h, t}
\def\thetable{\thesection.\@arabic\c@table}
\def\fps@table{h, t}
\newtheorem{theorem}{Theorem}
\newtheorem{definition}[theorem]{Definition}
\newtheorem{proposition}[theorem]{Proposition}
\begin{document}

\title{On a Lagrangian reduction and a deformation \\of completely integrable systems}

\author{Alexis Arnaudon$^{1}$}
\addtocounter{footnote}{1}
\footnotetext{Department of Mathematics, Imperial College, London SW7 2AZ, UK, 
		\texttt{alexis.arnaudon@imperial.ac.uk}, \url{http://wwwf.imperial.ac.uk/~aa10213}
\addtocounter{footnote}{1}}

\date{\today}

\maketitle

\makeatother

\begin{abstract}
	We develop a theory of Lagrangian reduction on loop groups for completely integrable systems after having exchanged the role of the space and time variables in the multi-time interpretation of integrable hierarchies.
	We then insert the Sobolev norm $H^1$ in the Lagrangian and derive a deformation of the corresponding hierarchies.
	The integrability of the deformed equations is altered and a notion of weak integrability is introduced. 
	We implement this scheme in the AKNS and SO(3) hierarchies and obtain known and new equations. 
	Among them we found two important equations, the Camassa-Holm equation, viewed as a deformation of the KdV equation, and a deformation of the NLS equation.
\end{abstract}

\section{Introduction}
The classification of integrable systems through hierarchies of commuting flows such as the AKNS hierarchy is a well established theory which started with \cite{ablowitz1973nonlinear,ablowitz1974inverse,date1982transformation,flaschka1983kac,newell1985solitons,adler2000symmetry} and encompasses almost all other notions of integrability, such as multi-Hamiltonian structures, Lax pairs, zero curvature relations (ZCR), $\tau$-functions, bi-linear equations and Painlev\'e hierarchies. 
We refer the interested reader to the very complete book \cite{scott2006encyclopedia} and the references therein for more details on various other subjects in the theory of integrable systems. 
Recently, the discovery and the study of equations involving non-local dispersion such as the Camassa-Holm equation opened a new area in integrable systems. 
We refer to \cite{camassa1993,fokas1995class,fuchssteiner1996some,olver1996tri,qiao2007new,novikov2009generalisations} for some well studied equations of this type. 
Some of these equations are even physically relevant as higher approximations of shallow water equations.
We refer to \cite{camassa1993,dullin2004asymptotically,constantin2009hydrodynamical} for physical derivations of the CH equation.
From this physical viewpoint, they are deformations of classical integrable systems or higher order approximations of more complete physical models.  
Despite these facts, it is well-known that the integrability of the deformed equations is slightly different from their classical counterpart.  
They have non-local conservation laws (\cite{camassa1993,lenells2005conservation}), ZCRs without associated Zakharov-Shabat spectral problems (\cite{hone2003prolongation,constantin2006inverse}). 
Perhaps these difficulties explain why a classification of these equations, based on hierarchies such as the AKNS hierarchy, is still missing.
Notice that recently \cite{novikov2009generalisations} made a classification using an ansatz for the form of the equations and a test for their integrability, developed in \cite{mikhailov2002perturbative}.
We will not follow their approach here because our aim is to understand each equation as a member of a hierarchy only defined with a Lie algebra in the sense of \cite{ablowitz1974inverse,newell1985solitons}. 
The key element that we will be using to develop such a theory is the fact that these deformed equations correspond to classical equations when the parameter $\alpha$ of the Helmholtz operator, or $H^1$ norm, is set to $0$.  
For example, the Camassa-Holm equation \cite{camassa1993} corresponds to KdV and the modified Camassa-Holm equation \cite{fokas1995class,qiao2007new} to mKdV. 
We will thus deform classical integrable hierarchies such that the deformed equations will be recovered and shown to correspond to a particular member of the original hierarchy. 

Following \cite{date1982transformation,flaschka1983kac,newell1985solitons} we will use the loop group and multi-time interpretation of integrable hierarchies. 
The concept of multi-times is fundamental in this formulation and makes sense of reduction procedures on the cotangent bundle of loop groups; see \cite{pressley1986loop} for a detailed account on loop groups. 
In order to allow an equivalent Lagrangian formulation, we will extend these ideas by simply having a different interpretation of the multi-times. 
In the standard theory, the space variable is fixed and the flows of the hierarchy, or higher order symmetries,  are spanned by the time variable. 
In the new Lagrangian interpretation, the time is fixed and the hierarchy is spanned by the space variable. 
Notice that our Lagrangian theory is different from the pluri-Lagrangian systems initiated in \cite{lobb2009lagrangian} and further developed, for example, by \cite{suris2013variational}. 

In a second part, the usual $L^2$ norm in the Lagrangian will be replaced by the $H^1$ norm and the corresponding deformation of the hierarchy computed. 
This use of the Sobolev norm is common to derive the CH equation as a geodesic motion on the group of diffeomorphisms of first Sobolev class; see for instance \cite{misiolek1998shallow,holm2005momentum}, or \cite{guha2007euler} for yet another use.  
This procedure allows us to deform the entire classical hierarchies such as the AKNS hierarchy in order to recover the CH equation among others. 
\begin{table}[htpb]
		\centering
		\begin{tabu}{cclll}
				\hline
				$\mathfrak g$ & $N_{ij}$ & Standard equation & Deformed equation & Limit $\alpha^2\to\infty$\\ \tabucline[2pt]{-}
				$\mathfrak{sl}(2)$ & $(1,2)$ & NLS \eqref{NLS} & CH-NLS$^*$ \eqref{CH-NLS} &HS-NLS$^*$ \eqref{HS-NLS}\\ 
				\multirow{2}{*}{$\mathfrak{sl}(2)$} & \multirow{2}{*}{$(1,3)$} & KdV \eqref{KdV}/ mKdV \eqref{mKdV}  & CH \eqref{CH}/ mCH \eqref{mCH} & HS \eqref{HS} \\
				&  & CmKdV \eqref{CmKdV}  &CmCH \eqref{CmCH} & mHS$^*$ \eqref{mHS}\\  \hline
				$\mathfrak{so}(3)$ & $(1,2)$ &  \eqref{SO3-NLS}$^{*}$ & \eqref{CH-SO3-NLS}$^*$  &-\\ 
				$\mathfrak{so}(3)$ & $(1,3)$ & \eqref{CmmKdV}$^{*}$ & \eqref{CmmCH}$^*$ &   \eqref{mHS}$^*$\\\hline 
		\end{tabu}
		\caption{Summary of the equations derived in this work using the hierarchy classification. 
		The third column corresponds to classical equations such as NLS or KdV, the next column their deformations and the last one exposes a few limiting cases with $\alpha^2\to\infty$. 
		We only considered the first two flows for the Lie algebra $\mathfrak{sl}(2)$ and $\mathfrak{so}(3)$ but other flow and Lie algebra could be derived in the same way. Asterisks indicate possibly new equations. }
		\label{tab:eq-summary}
\end{table}
The classical integrable equations and their deformations can thus be classified though the Lie algebra $\mathfrak g$ and the choice of space and time variables, or 2-dimensional slices $N_{ij}$ indexed by $(i,j)$.
This is summarized in the table \ref{tab:eq-summary}, where the asterisks denote possible new equations.
Almost all the deformed equations are already known except the deformation of the NLS equation, which reads 
\begin{align}
		im_t +u_{xx}+ 2\sigma m(|u|^2- \alpha^2|u_x|^2)=0, \qquad m=u-\alpha^2u_{xx},\qquad \sigma=\pm 1.
		\label{CHNLS}
\end{align}
Indeed the weak integrability presented here does not guarantee its complete integrability.
Despite the possible non-integrability, it has been shown in \cite{arnaudon2016deformation} that the equation \eqref{CHNLS} contains solitary waves and even peaked standing waves with almost elastic collisions. 
We want to mention that the CH-NLS equation is different from the generalized NLS equation, first derived in \cite{fokas1995class,olver1996tri} and more recently studied by \cite{lenells2009novel}.
For the derivation of this equation they used the bi-Hamiltonian property of the NLS and CH equations to find an integrable extension of the NLS equation, without using the Helmholtz operator in an intrinsic way. 
Other similar attempts for improving the NLS equation, but without asking the integrability question, were made by \cite{colin2009short,dumas2016variants} with an improved dispersion, also involving the Helmholtz operator.  

\paragraph{Structure of this work}

We will develop in section \ref{theory-reduction} the Lagrangian description of integrable hierarchies with central extensions, similarly to the $R$-matrix theory of \cite{semenov1983classical}. 
The Lagrangian reduction theorem with central extension will be stated with the formalism of \cite{marsden2006book}. 
It is worth mentioning that it seems to be the first time that this theorem is stated in this form, with this type of central extension. 
This new interpretation of the multi-times is then described in section \ref{multi-time} as well as how the corresponding Euler-Poincar\'e or equivalent Lie-Poisson equations arise on two dimensional slices of the multi-times. 
On these slices, the time coordinate will be the usual dynamical coordinate and the space coordinate will be seen as the parameter of an infinite dimensional group or Lie algebra.
The dynamics along the space coordinate will then be made non-trivial with the help of the derivative cocycle.  
Examples such as the AKNS hierarchy with $SL(2)$ and another hierarchy with $SO(3)$ will be shown in section \ref{application}.
After having set up the Lagrangian reduction, the deformation using the Sobolev norm is straightforward to implement in the Euler-Poincar\'e equation.
In section \ref{sobolev-deform}, the deformation of the hierarchy is derived and its integrability is investigated. 
As opposed to the classical case, where there is an equivalence between the Euler-Poincar\'e equation and the associated isospectral problem, the Euler-Poincar\'e equation cannot be directly interpreted as a ZCR. 
With the Fourier decomposition of the loop algebra elements, parts of the Euler-Poincar\'e equation are trivially satisfied for the highest powers of the loop, or spectral parameter $\lambda$, but are not valid in the deformed Euler-Poincar\'e equation. 
We must therefore define a projection which removes these terms and makes sense of the projected Euler-Poincar\'e equation, or projected ZCR. 
The corresponding PDEs will then be said to be weakly integrable if they satisfy a projected ZCR. \\
This method allows us to deform all members of the AKNS hierarchy in order to recover equations such as the dispersive Camassa-Holm equation \cite{camassa1993,dullin2004asymptotically} and the new CH-NLS equation \cite{arnaudon2016deformation}.  
This will be done in section \ref{CH-example} for AKNS hierarchy and then for the $SO(3)$ hierarchy.

\section{Lagrangian interpretation of integrable hierarchies}

\subsection{Reductions with a central extension}\label{theory-reduction}

In this work we will consider a particular type of reduction by symmetry where the configuration manifold is the group of symmetry itself. 
The corresponding reduction is called the Euler-Poincar\'e or the Lie-Poisson reduction, for respectively the Lagrangian or Hamiltonian mechanics; see \cite{marsden1999book} for a complete treatment. 
The Lie group in this section will be infinite dimensional and of the form $\mathrm{Map}(\mathbb R,G)$ where $G$ is a Lie group. 
The $\mathbb R$ variable will be the space variable $x$ of the 1+1 nonlinear PDEs that will be derived. 
The dynamics with respect to $x$ will be made non-trivial by using a central extension with a derivative cocycle. 
This system is different from usual 1+1 PDEs coming from a reduction by symmetry; see \cite{ellis2009dynamics,gay2009complex} for example. 
Indeed, in the standard theory, the dynamics on the space variable comes from an affine action of the group of symmetry on the advected quantities. 
Here, the dynamics arises from a central extension with a cocycle. 

\subsubsection{Central extension}

We refer to \cite{marsden2006book} for a complete treatment of group extensions in mechanics and we will only recall useful facts without proofs.
A central extension $G^c:= G\times V$ of a group $G$ by a vector space $V$ is characterized by the action of $G^c$ onto itself with a cocycle term in the extension of the group.  
A group two-cocycle is a bilinear map $B(g,h):G\times G\to \mathbb R$ which satisfies a cocycle identity 
such that the group action $(g,a)\cdot (h,b)= (gh,a+b+B(g,h))$ is associative. 
The Lie algebra of $G^c$ is centrally extended by the tangent space of the vector space $V$. 
 We will always use $V=\mathbb R$ and thus $\mathfrak g^c:= \mathfrak g \times \mathbb R$. 
The group cocycle drops to the Lie algebra by differentiation to give a Lie algebra cocycle $c(\xi,\eta):\mathfrak g\times \mathfrak g \to \mathbb R$ which satisfies a cocycle identity such that the corresponding Lie bracket satisfies the Jacobi identity. 
The adjoint and coadjoint actions are given by
\begin{align}
	\mathrm{ad}_{(\xi,a)}(\eta,b) &= [(\xi,a),(\eta,b)]= ([\xi,\eta],c(\xi,\eta))\quad \mathrm{and}\\
	\mathrm{ad}^*_{(\xi,a)}(\mu,m) &= (\mathrm{ad}^*_\xi\mu + mc(\xi,\cdot),0),
\end{align}
where $(\xi,a),(\eta,b)\in \mathfrak g^c$ and $(\mu,m)\in (\mathfrak g^c)^*$.
We will also need the formulas for the inverse of a group element and for the tangent of the left translation 
\begin{align}
	(g,0)^{-1} = (g^{-1}, -B(g^{-1},g)),\qquad (g,0)^{-1}(\dot g,0)= (g^{-1}\dot g,-D_2B(g^{-1},\dot g)),
	\label{G-triv}
\end{align}
where $D_2B$ stands for the derivative in the second slot of $B$. \\
For the present theory there will be a space variable $x$.
The dynamics along this variable is assumed to be smooth and will be given by the derivative cocycle $B(g,h)= \int g\partial_x h dx$.
The corresponding Lie algebra cocycle is 
\begin{align}
	c(\xi,\eta)= \int \langle \xi,\partial_x\eta\rangle dx,
\end{align} 
where $\langle\cdot, \cdot \rangle$ is the Killing form on the semi-simple Lie algebra $\mathfrak g$. 
We will always consider semi-simple Lie algebras and periodic or vanishing boundary conditions. 
The main point is, as always, to identify $\mathfrak g$ with $\mathfrak g^*$ and to be able to freely perform integrations by parts. 

\subsubsection{Lie-Poisson equations with a central extension}\label{LP-sect}

The reduction procedure on the cotangent bundle of a Lie group leads to a Lie-Poisson equation on the dual of the Lie algebra of this group. 
When using a central extension of the Lie group, the variable in the centre of the Lie algebra will always be a constant and thus a standard kinetic term can be taken in the Hamiltonian. 
We refer to \cite{marsden2006book,garcia2013nonholonomic} for more details of this construction. 
The theorem can now be stated; see \cite{marsden2006book} for the proof. 

\begin{theorem}
	Let $\mathfrak g^c$ be the central extension of the Lie algebra $\mathfrak g$ with cocycle $c:\mathfrak g\times \mathfrak g\to\mathbb R$ and $h:(\mathfrak g^c)^*\to \mathbb R$ be Hamiltonian function. 
	The Lie-Poisson bracket is 
	\begin{align}
		\{F,G\}((L,a))&=\left \langle L,\left [\frac{\delta F}{\delta L},\frac{\delta G}{\delta L}\right ]\right \rangle +a\, c\left (\frac{\delta F}{\delta L},\frac{\delta G}{\delta L}\right ),
	\end{align}
	and the Lie-Poisson equation is 
	\begin{align}
		\partial_tL &= \mathrm{ad}^*_\frac{\delta h}{\delta L}L + a\, c\left (\frac{\delta h}{\delta L},\cdot \right).
		\label{LP}
	\end{align}
\end{theorem}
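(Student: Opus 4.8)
The plan is to obtain both displayed formulas by specializing the general Lie-Poisson reduction theorem to the centrally extended algebra $\mathfrak g^c$, so that the entire argument reduces to substituting the bracket $[(\xi,s),(\eta,t)]=([\xi,\eta],c(\xi,\eta))$ and the coadjoint action $\mathrm{ad}^*_{(\xi,s)}(\mu,m)=(\mathrm{ad}^*_\xi\mu+m\,c(\xi,\cdot),0)$, both already recorded in the excerpt, into the abstract identities $\{F,G\}(\nu)=\langle\nu,[\delta F/\delta\nu,\delta G/\delta\nu]\rangle$ and $\partial_t\nu=\mathrm{ad}^*_{\delta h/\delta\nu}\nu$ valid on the dual of any Lie algebra.

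First I would establish those abstract identities by cotangent-bundle reduction. I would equip $T^*G^c$ with its canonical symplectic form and the associated canonical Poisson bracket, trivialize $T^*G^c\cong G^c\times(\mathfrak g^c)^*$ by left translation $\alpha_g\mapsto(g,T^*_eL_g\,\alpha_g)$, and note that functions invariant under the lifted left $G^c$-action — in particular the Hamiltonian, which depends only on the momentum — descend to functions on $(\mathfrak g^c)^*$. Restricting the canonical bracket to such invariant functions produces the Lie-Poisson bracket, and the reduced Hamiltonian vector field is $\mathrm{ad}^*_{\delta h/\delta\nu}\nu$; this is the classical content I would quote or reprove. For the bookkeeping I would record that the pairing and the functional derivative both split, $\langle(L,a),(\xi,s)\rangle_{\mathfrak g^c}=\langle L,\xi\rangle+as$ and $\delta F/\delta\nu=(\delta F/\delta L,\delta F/\delta a)$, where $\nu=(L,a)$.

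The bracket then drops out by direct substitution. Since the central generator is bracket-central, the extended bracket of $\delta F/\delta\nu$ and $\delta G/\delta\nu$ keeps no dependence on the $\delta/\delta a$ components, and pairing it against $(L,a)$ gives
\[ \{F,G\}(L,a)=\langle L,[\delta F/\delta L,\delta G/\delta L]\rangle+a\,c(\delta F/\delta L,\delta G/\delta L), \]
which is the stated formula. For the equations of motion I would expand $\partial_t\nu=\mathrm{ad}^*_{\delta h/\delta\nu}\nu$ at $\nu=(L,a)$ using the coadjoint formula with $\xi=\delta h/\delta L$, $\mu=L$ and $m=a$: the central slot gives $\partial_t a=0$, so $a$ is a Casimir and may be carried along as the fixed constant appearing in the Hamiltonian, while the $\mathfrak g^*$ slot gives $\partial_t L=\mathrm{ad}^*_{\delta h/\delta L}L+a\,c(\delta h/\delta L,\cdot)$.

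The substitution itself is routine, as is making the last term concrete by writing $c(\delta h/\delta L,\cdot)\in\mathfrak g^*$ as a field through $c(\xi,\eta)=\int\langle\xi,\partial_x\eta\rangle\,dx$ and the Killing-form identification of $\mathfrak g$ with $\mathfrak g^*$. The one genuinely structural point, which I would settle at the outset, is that $(\mathfrak g^c,[\cdot,\cdot])$ really is a Lie algebra: the Jacobi identity for the extended bracket is precisely the Lie-algebra cocycle identity for $c$, and it is this that guarantees the reduced bracket is Poisson. The remaining subtlety — and the main place where care is needed — is fixing the sign conventions consistently, namely the $\pm$ ambiguity of the Lie-Poisson bracket inherited from the left-versus-right trivialization and the correlated sign in the definition of $\mathrm{ad}^*$, so that the signs in both displayed formulas agree.
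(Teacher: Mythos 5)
Your proposal is correct: the paper itself gives no proof of this theorem, deferring instead to the cited reference (Marsden et al.), and your argument --- Lie-Poisson reduction of $T^*G^c$ by left trivialization, followed by substitution of the extended bracket $[(\xi,s),(\eta,t)]=([\xi,\eta],c(\xi,\eta))$, the split pairing $\langle(L,a),(\xi,s)\rangle=\langle L,\xi\rangle+as$, and the coadjoint formula $\mathrm{ad}^*_{(\xi,s)}(\mu,m)=(\mathrm{ad}^*_\xi\mu+m\,c(\xi,\cdot),0)$, with $\partial_t a=0$ identifying $a$ as a Casimir --- is exactly the standard proof that reference supplies. Your closing remarks on the Jacobi/cocycle identity and on fixing the sign conventions consistently between the bracket and $\mathrm{ad}^*$ address the only genuinely delicate points, so nothing is missing.
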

These equations simplify, by using the derivative cocycle, the Killing form and $a=1$, to
\begin{align}
	\{F,G\}(L)=\int \left \langle L,\left [\frac{\delta F}{\delta L},\frac{\delta G}{\delta L}\right ]\right \rangle dx + \int \left \langle\frac{\delta F}{\delta L},\partial_x\frac{\delta G}{\delta L}\right \rangle dx,\qquad \partial_tL- \partial_x \frac{\delta h}{\delta L} = \mathrm{ad}_\frac{\delta h}{\delta L}L . 
\end{align}
Notice that the form of the Lie-Poisson equation is the same as the usual zero curvature relation of integrable systems and is also the Lie-Poisson equation used in the $R$-matrix derivation of integrable system; see \cite{semenov1983classical,blaszak2009classical}.

\subsubsection{Euler-Poincar\'e equations with central extension}\label{EP-sect}

Provided that the Legendre transformation exists, the derivation of the corresponding Euler-Poincar\'e equation is straightforward.
However, in the integrable systems context, there is no Legendre transformation and the Euler-Poincar\'e equation must directly be derived from the variational principle. 
We must therefore state the following Euler-Poincar\'e reduction theorem. 
\begin{theorem}
Using the above definitions, the following statements are equivalent:
\begin{enumerate}
	\item[(1)] Hamilton's variational principle with Lagrangian $\mathscr L=\int\widetilde{\mathscr  L}dx:G\to \mathbb R$ holds on $G^c$  
		\begin{align*}
			\delta \int \mathscr L(g(t,x),\dot g(t,x))dt = \delta \iint\widetilde{\mathscr L}(g(t,x),\dot g(t,x)) dx dt=0 
		\end{align*}
		for arbitrary variations $\delta g$  vanishing at the endpoints; 
	\item[(2)] $g(t)$ satisfies the Euler-Lagrange equations on $G$; 
	\item[(3)] the constrained variational principle 
		\begin{align*}
			\delta \int l(M(t,x))dt = \delta \iint\ell(M(t,x))  dx dt = 0 
		\end{align*}
		holds on $\mathfrak g^c$, using variations of the form 
		\begin{align}
			\delta (M,0)= \left (\dot \eta + [M,\eta],c(\eta,M )\right),
		\end{align}
		for arbitrary $\eta$ vanishing at the endpoints; 
	\item[(4)] the Euler-Poincar\'e equation with central extension holds, that is 
		\begin{align}
			\frac{\partial}{\partial t} \frac{\delta \ell}{\delta M} = \mathrm{ad}^*_M\frac{\delta \ell}{\delta M} + c(M,\cdot).
			\label{EP}
		\end{align}
\end{enumerate}
\end{theorem}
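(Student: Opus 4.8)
The plan is to establish the cycle of equivalences $(1)\Leftrightarrow(2)\Leftrightarrow(3)\Leftrightarrow(4)$, following the standard architecture of Euler--Poincar\'e reduction while tracking the extra central contribution at each stage. The equivalence $(1)\Leftrightarrow(2)$ is the classical statement of the calculus of variations on the manifold $G$: stationarity of the action under fixed-endpoint variations is equivalent to the Euler--Lagrange equations. The central extension plays no role here, since the variable in the centre is constant along the motion (as noted in Section~\ref{LP-sect}) and only contributes a fixed kinetic term. The genuine content lies in relating the unreduced description on $G^c$ to the reduced one on $\mathfrak g^c$, that is, in the steps $(2)\Leftrightarrow(3)$ and $(3)\Leftrightarrow(4)$.

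The central step is to compute the form of the induced variations. First I would introduce the left-trivialized velocity $M=g^{-1}\dot g$ and the reduced variation $\eta=g^{-1}\delta g$, both $\mathfrak g$-valued, and promote them to the extension using the trivialization formulas \eqref{G-triv}. The key lemma is that, as $\delta g$ ranges over fixed-endpoint variations of $g$, the curve $\eta$ ranges freely over fixed-endpoint curves in $\mathfrak g$, and the induced variation of $(M,0)$ takes the constrained form
\begin{align*}
\delta(M,0)=\bigl(\dot\eta+[M,\eta],\,c(\eta,M)\bigr).
\end{align*}
The $\mathfrak g$-component $\delta M=\dot\eta+[M,\eta]$ is obtained in the usual way, by realising $M$ and $\eta$ as mixed partial derivatives of a two-parameter family $g(t,\epsilon)$, equating the cross derivatives of $g^{-1}\partial_t g$ and $g^{-1}\partial_\epsilon g$, and simplifying with the Lie bracket. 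The central component $c(\eta,M)$ is where the extension enters: it arises by differentiating the group cocycle $B$ appearing in \eqref{G-triv} and recognising, through the relation between the group two-cocycle and its Lie-algebra counterpart, that the surviving term is precisely the Lie-algebra cocycle $c(\eta,M)$.

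With this lemma in hand, $(2)\Leftrightarrow(3)$ follows from the definition of the reduced Lagrangian: because $\ell$ is the left-trivialization of $\mathscr L$, the two action integrals agree along corresponding curves, and the bijection between fixed-endpoint $\delta g$ and fixed-endpoint $\eta$ transports stationarity of one variational principle to the other. For $(3)\Leftrightarrow(4)$ I would substitute the constrained variation into
\begin{align*}
\delta\!\int \ell(M)\,dt=\int\Bigl\langle \tfrac{\delta\ell}{\delta M},\,\delta M\Bigr\rangle\,dt,
\end{align*}
pair $\tfrac{\delta\ell}{\delta M}\in(\mathfrak g^c)^*$ against each component of $\delta(M,0)$, integrate by parts in $t$ (the boundary terms vanish since $\eta$ is fixed at the endpoints), and collect the coefficient of $\eta$. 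Using the definition of $\mathrm{ad}^*$ on $\mathfrak g$ together with the coadjoint formula with central term recorded before the statement, the vanishing of this coefficient for arbitrary $\eta$ yields exactly \eqref{EP}.

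The main obstacle I anticipate is the careful bookkeeping in the central direction: verifying that differentiating the group cocycle $B$ in the trivialized velocity produces the clean expression $c(\eta,M)$ and no additional terms. This requires invoking the cocycle identity for $B$ together with the fact that $c$ is recovered from $B$ by differentiation at the identity; once this identification is secured, everything else reduces to the familiar Euler--Poincar\'e integration-by-parts argument, now carried out component-wise in $\mathfrak g^c$.
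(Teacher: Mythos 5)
Your proposal is correct and follows the same overall architecture as the paper's proof: $(1)\Leftrightarrow(2)$ dispatched by classical Hamilton's-principle theory, a key lemma identifying the constrained form of the reduced variations $\delta(M,0)=(\dot\eta+[M,\eta],\,c(\eta,M))$, and then the familiar integration-by-parts argument collecting the coefficient of $\eta$ to get \eqref{EP}. The one genuine difference is in how that key lemma is established. You propose the abstract route: differentiate the group two-cocycle $B$ appearing in the trivialization \eqref{G-triv} and invoke the cocycle identity together with the fact that $c$ is the derivative of $B$ at the identity, so the argument works for an arbitrary central extension. The paper deliberately does not do this --- it states that ``only the proof with derivative cocycle and semisimple Lie group will be given'' --- and instead computes $\delta\bigl(-\int g^{-1}\partial_x\dot g\,dx\bigr)$ by hand, producing four terms and killing three of them via explicit integrations by parts in $x$ under periodic or vanishing boundary conditions, leaving exactly $\int\langle\eta,\partial_x M\rangle\,dx=c(\eta,M)$. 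Your route buys generality matching the theorem as stated; the paper's route buys a concrete verification of precisely the point you flag as your main obstacle, namely that no extra terms survive --- and in this infinite-dimensional loop-group setting that verification is where the analytic content sits, since the discarded terms vanish only because of the boundary conditions and the invariance of the Killing form, not for formal reasons. One small bookkeeping item you elide and the paper handles explicitly: the Lagrangian must be specified on the centre of $\mathfrak g^c$ (the paper takes the kinetic term $\tfrac12 a^2$), and since $\dot a=0$ one fixes $a=1$; this is what makes the cocycle term in \eqref{EP} appear with coefficient exactly $1$ rather than an undetermined constant multiplying $c(M,\cdot)$. Your remark that the central variable ``only contributes a fixed kinetic term'' gestures at this but should be stated where you pair the variation against the momentum, not only in the step $(1)\Leftrightarrow(2)$.
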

As for the Lie-Poisson equation, the derivative cocycle and the Killing form help simplifying the variation and the Euler-Poincar\'e equation
\begin{align}
	\delta (M,0)= \left (\partial_t \eta + [M,\eta],\int \left \langle \eta,\partial_xM\right \rangle dx \right ),\qquad \frac{\partial}{\partial t} \frac{\delta \ell}{\delta M} = \mathrm{ad}_M\frac{\delta \ell}{\delta M} +\partial_xM. 
\end{align}
Only the proof with derivative cocycle and semisimple Lie group will be given as the general case is not of interest for this work. 
\begin{proof}
The equivalence of $(1)$ and $(2)$ comes from general theory of Hamilton's principle. 
For $(3)$, the reduced variations are computed using the action on the central extension of $G$ provided by the cocycle $B(g,h)=\int g\partial_xh dx$. 
With the left trivialization formula \eqref{G-triv}, a left trivialized generic element reads 
\begin{align*}
	(M,0)= (g,0)^{-1}(\dot g,0)= \left (g^{-1}\dot g,-\int g^{-1}\partial_x\dot g dx\right ),
\end{align*}
and its variation decomposes as
\begin{align*}
	\delta (M,0)= \delta (g^{-1}(\dot g,0))= \left (\delta (g^{-1}\dot g),-\int \delta (g^{-1}\partial_x\dot g) dx\right ),
\end{align*}
where the first slot gives the usual variation, namely $\delta M = \dot \eta + [M,\eta]$ for arbitrary $\eta =g^{-1}\delta g$. 
The second term needs a little computation with integration by parts 
\begin{align*}
	-\int \delta (g^{-1}\partial_x\dot g) dx&= \int\left [ -g^{-1}\delta g \partial_x (g^{-1}\dot g) + g^{-1}\delta gg^{-1}\partial_xgg^{-1}\dot g  -  g^{-1}\partial_xg g^{-1}\delta \dot g \right ]dx\\
	&= \int\left [ \eta \partial_x M+\eta g^{-1}\partial_xgM - g^{-1}\partial_xg \dot \eta  - g^{-1}\partial_xg M \eta \right ] dx.
\end{align*}
Then, by noticing that 
\begin{align*}
	0= \frac12\int \partial_x(g^{-1}g\dot \eta)dx 
	= \frac12 \int \partial_xg^{-1} g \dot \eta + g^{-1}\partial_x g \dot \eta + \partial_x\dot \eta dx
	= \int g^{-1}\partial_xg \dot \eta + \partial_x\dot \eta,
\end{align*}
the second term in the previous calculation vanishes with proper boundary conditions and similarly for the last two terms by using
\begin{align*}
	0&= \int \partial_x (\eta g^{-1}g M)dx= 2\eta g^{-1}\partial_xg M + \partial_x(\eta M)\\
	\mathrm{and}\qquad 0&= \int \partial_x (g^{-1}g M\eta )dx= 2 g^{-1}\partial_xg M\eta + \partial_x(M\eta).
\end{align*}
We can now compute the Euler-Poincar\'e equation with central extension from the variational principle and prove $(4)$ 
\begin{align*}
	\delta \int \ell(M)dt &= \iint  \left(\left (\frac{\delta \ell}{\delta M},1\right ),\left (\dot \eta +[M,\eta], c(\eta,M)\right )\right)dt\, dx\, d\lambda \\
	&=  \iint \frac{\delta \ell}{\delta M}(\dot \eta +[M,\eta])\,dt\,dx+ \int  c(\eta,M)\, dt\\
	&= \iint \left ((-\partial_t+ \mathrm{ad}^*_M)\frac{\delta \ell}{\delta M}+\partial_xM \right)\eta\, dt\, dx.
\end{align*}
We implicitly used the freedom of the form of the Lagrangian on the centre of $\mathfrak g^c$ to choose the kinetic term $\frac12 a^2$ with $a \in \mathbb R$. 
Then, because $\dot a =0$, we fixed $a=1$ and recovered the Euler-Poincar\'e equations \eqref{EP}. 
We refer to \cite{marsden2006book,garcia2013nonholonomic} for a similar construction. 
\end{proof}

Provided the Legendre transformation is well-defined, the Euler-Poincar\'e equation \eqref{EP} is equivalent to the Lie-Poisson equation \eqref{LP}.
This can easily be seen by using the relation between the conjugate momentum $L$ and velocity $M$
\begin{align*}
	L:=\frac{\delta l(M)}{\delta M}\in \mathfrak g^*\qquad \mathrm{or}\qquad M:=\frac{\delta h(L)}{\delta L}\in \mathfrak g.
\end{align*}

\subsection{Loop group and multi-time theory}\label{multi-time}

The idea of multi-times for integrable systems was first introduced by \cite{date1982transformation} and further developed in \cite{flaschka1983kac,flaschka1983kac2,newell1985solitons}.
We will review here the key ingredients of this theory and then explain the links with the previous reduction theory. 

\subsubsection{Loop groups and loop algebras}

The phase space is constructed from a particular infinite dimensional Lie group, the polynomial loop group; see \cite{pressley1986loop} for more details. 
For a semi-simple Lie group $G$, the associated loop group is $\widetilde G:= \mathrm{Map}(S^1,G)$, maps from the circle $S^1$ with parameter $\lambda$ to group $G$. 
We will consider the elements of $\widetilde G$ through their Fourier series around $\lambda=0$, namely they will be polynomials with possibly an infinite number of negative powers of $\lambda$. 
The Lie algebra of $\widetilde G$ is then straightforward to construct. 
From the Lie algebra $\mathfrak g$ of $G$ the Lie algebra of $\widetilde G$ is $\widetilde{\mathfrak g}= \mathrm{Map}(S^1,\mathfrak g)$. 
With the Killing form of the semi-simple Lie algebra $\mathfrak g$ one can also construct a pairing on $\widetilde{\mathfrak g}$ using the residue theorem. 
For two generic elements $\sum_i\xi_i\lambda^i$ and $\sum_i\eta_i\lambda^i$ the following calculation gives a simple form for the pairing
\begin{align}
	\sum_{i,j}\langle \xi_i\lambda^i,\eta_j\lambda^j\rangle_{\widetilde{\mathfrak g}} :=\sum_{i,j} \frac{1}{2\pi i }\int_{S^1} \langle\xi_i,\eta_j\rangle \lambda^{i+j} d\lambda= \sum_{i,j}\langle \xi_i, \eta_j\rangle \delta_{i,-1-j}.
\end{align}
Notice that, the loop parameter $\lambda$ is different from the space variable $x$ of the previous section. 
In fact, the finite dimensional Lie group $G$ of the previous section will be replaced by the loop group $\widetilde G$. 
We will thus use functions of $\lambda$ and $x$ taking values in a finite dimensional Lie group $G$ or Lie algebra $\mathfrak g$. 

\subsubsection{Multi-time phase space}

Let the space-time manifold be a flat Riemannian manifold of countably infinite dimensions defined as $N:= \mathrm{lim}_{n\to\infty} \mathbb R^n$ and endowed with the standard metric $g_{ij} = \delta_{ij}$.   
The coordinates are denoted by $\mathbf a = (a_1,a_2,\ldots)$. 
The choice of the letter $\mathbf a$ is to emphasize that there is no particular time or space variable, just coordinates on a manifold.
In the following we will consider hyperplanes of $N$ spanned by two variables, indexed by $i$ and $j$. 
They are slices of the infinite dimensional space-time and they will be denoted $N_{ij}= \mathrm{span}(a_i,a_j)\subset N$. 
As we will see later, $i$ is fixed at the beginning of the theory and $j$ will be selected almost at the end to obtain a 1+1 PDE. 
In these slices there will thus be two interpretations for the physical meaning of $a_i$ and $a_j$. 
We can either choose to set the spatial variable to be $x:=a_i$ and the time will be $t:=a_j$ or the reverse.
The first interpretation, introduced by \cite{date1982transformation,flaschka1983kac,flaschka1983kac2,newell1985solitons}, is common in the literature and leads to the standard Hamiltonian formalism. 
Surprisingly, the second interpretation seems to have never been noticed before in this particular context, although it is an old idea in field theory; see for example \cite{caudrelier2015multisymplectic,caudrelier2015multisymplecticb} for other recent uses of this idea for the Sine-Gordon equation or for the inclusion of defects in integrable systems.
We want to emphasize that the only difference lies in the fact that one of the coordinates is selected at a different stage in the theory.

Going back to the full space-time manifold $N$, the phase space can in fact be understood in the context of classical field theory. 
Indeed, it is the first jet bundle $J^1(N,\widetilde G)$. 
We refer to \cite{gotay2004partI,lopez2001euler} for the general constructions of this space in field theory. 
In our case, the bundle structure $N\times \widetilde G\to N$ is trivial, thus the jet bundle is isomorphic to $T^*N\otimes T\widetilde G$, the space of linear maps from $TN$ to $T\widetilde G$. 
With the left trivialization of $TG$, the reduced phase space is then $(T^*N\otimes T\widetilde G)/\widetilde G= T^*N\otimes\widetilde{\mathfrak g}$. 
This phase space can be reduced by selecting a particular slice and is thus the phase space of a 1+1 PDE.
A section of the bundle $T^*N\otimes \widetilde{\mathfrak g}\to N$, namely a map $\mathsf M:N\to T^*N\otimes \widetilde{\mathfrak g}$, corresponds to the projection of a map $\mathsf V:N\to T^*N\otimes T\widetilde G$ only if the curvature of $\mathsf M$ vanishes for all $\mathbf a$. 
The curvature is defined as the covariant exterior derivative of $\mathsf M$ with respect to $\mathsf M$, viewed as a connection on this bundle structure and is given by $d^\mathsf{M}\mathsf{M} = d\mathsf{M} +[\mathsf{M},\mathsf{M}]$. 
In components, the curvature reads, for an element $\mathsf{M}= \sum_{i=1}^{\infty}M^{(i)}da_i$,
\begin{align}
	d^\mathsf{M}\mathsf{M} = \sum_{i,j=1}^\infty(\partial_{a_j}M^{(i)}-\partial_{a_i}M^{(j)} + [M^{(i)},M^{(j)}])da_i\wedge da_j,
	\label{gZCR}
\end{align}
where $\partial_{a_i}$ denotes the partial derivative with respect to $a_i$. 
The relation given by $d^\mathsf{M}\mathsf{M}=0$ is called the zero curvature relation (ZCR) and contains an infinite number of constraints (one for each pair $(i,j)$) for the fields in $\mathsf M$. 

\subsubsection{Complete integrability}

Recall that the section $\mathsf M$ contains an infinite number of terms, each of them associated to a space-time direction, namely $M^{(i)}$ is associated to the direction $a_i$ for every $i$. 
Each $M^{(i)}$ belongs to $\widetilde{\mathfrak g}$ and also has infinitely many components. 
The section $\mathsf M$ has therefore too many degrees of freedom, an infinite number of infinite dimensional fields.
For the complete integrability to arise, the number of independent fields of this system  must be drastically reduced. 
It is done with the help of a very simple construction. 
We first define a particular loop algebra element with an essential singularity at $\lambda=0$, or an infinite number of negative powers of $\lambda$, by 
\begin{align}
	M^{(\infty)} := \sum_{i=1}^\infty M_i(\mathbf a)\lambda^{-i},
\end{align}
where the $M_i$ are sections of the bundle $N\times \mathfrak g\to N$. 
We will then use a shift operator and projections on the loop algebra. 
The shift operator is the multiplication by a power of $\lambda$ and the projections are projections on the two subalgebras with positive or strictly negative powers of $\lambda$, denoted by $P_\pm$.  
This decomposition in two subalgebra is crucial in this construction and is at the root of the $R$-matrix formalism of \cite{semenov1983classical} or the Marsden-Weinstein reduction exposed in \cite{newell1985solitons}.
With these tools we can define the other elements of the connection $\mathsf M$ as 
\begin{align}
	M^{(i)} := P_+(\lambda^iM^{(\infty)}). 
\end{align}
With this particular construction, we have a one to one correspondence between $\mathsf M$ and $M^{(\infty)}$ and therefore the system does only depend on one loop algebra element $M^{(\infty)}$.
We can thus expect that the infinite number of constraints from the ZCR would be enough to sufficiently reduce the number of independent fields.  
What we actually expect to obtain from this ZCR is to reduce the infinite number of $M_i$ to only a finite number where the freedom will be to choose the number of these independent fields by selecting a particular $M^{(i)}$. 
Once a $i$ is selected, $M^{(\infty)}$ will be a function of $M^{(i)}$ through the implicit relations given by the ZCR \eqref{gZCR} on the slice $N_{i\infty}=\lim_{j\to \infty}N_{ij}$. 
In this case they read
\begin{align}
	\partial_{a_i}M^{(\infty)} + [M^{(\infty)},M^{(i)}]=0. 
	\label{ZCR-relation}
\end{align}
The formula holds true in a more general setting, and here is the proposition. 
\begin{proposition}
	Choose a vector $w\in T_\mathbf{a}N$, and, because \eqref{gZCR} holds on every slice independently, it holds when one of the variable is $a_\infty:= \lim_{i\to\infty}a_i$. This can be written as a limit, where $d^\mathsf{M}\mathsf M:TN\times TN\to \mathbb R$, 
\begin{align}
	\lim_{n\to\infty} d^\mathsf{M}\mathsf M\left (w,\frac{\partial }{\partial a_n}\right ) =0. 
	\label{ZCR-inf}
\end{align}
The solution of this equation uniquely determines $M^{(\infty)}$ as a function of $\mathsf M\cdot w$, the contraction of the one form $\mathsf M$ and the vector field $w$. 
In the case when $w=\frac{\partial}{\partial a_i}$ one obtains \eqref{ZCR-relation}.
\label{prop}
\end{proposition}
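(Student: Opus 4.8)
The plan is to evaluate the curvature \eqref{gZCR} on the pair $\big(w,\partial/\partial a_n\big)$, rescale by $\lambda^{-n}$, and pass to the limit $n\to\infty$. Write $w=\sum_k w_k\,\partial/\partial a_k$ with constant components, only finitely many nonzero, and set $\mathsf M\cdot w=\sum_k w_k M^{(k)}$. Using $(da_i\wedge da_j)\big(w,\partial/\partial a_n\big)=w_i\delta_{jn}-w_j\delta_{in}$ and the antisymmetry of the summand in \eqref{gZCR}, the double sum collapses (up to the factor $2$ coming from summing over ordered pairs) to
\begin{align*}
	d^\mathsf{M}\mathsf M\Big(w,\frac{\partial}{\partial a_n}\Big)=2\Big(\partial_{a_n}(\mathsf M\cdot w)-\sum_k w_k\,\partial_{a_k}M^{(n)}+\big[\mathsf M\cdot w,\,M^{(n)}\big]\Big).
\end{align*}
So the whole statement reduces to understanding the three terms on the right after multiplication by $\lambda^{-n}$.

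The key lemma I would prove first is that $\lim_{n\to\infty}\lambda^{-n}M^{(n)}=M^{(\infty)}$ in the topology in which the loop algebra is completed, namely the one adapted to the order at $\lambda=0$ (the $\mu$-adic topology with $\mu=\lambda^{-1}$, in which $\lambda^{-i}\to 0$ as $i\to\infty$). This is immediate from the definition $M^{(n)}=P_+(\lambda^nM^{(\infty)})$: decomposing $\lambda^nM^{(\infty)}=P_+(\lambda^nM^{(\infty)})+P_-(\lambda^nM^{(\infty)})$ and multiplying by $\lambda^{-n}$ gives
\begin{align*}
	\lambda^{-n}M^{(n)}=M^{(\infty)}-\sum_{i>n}M_i\lambda^{-i},
\end{align*}
and the tail tends to zero. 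Since $X\mapsto\sum_k w_k\partial_{a_k}X$ and $X\mapsto[\mathsf M\cdot w,X]$ are continuous, the last two terms above converge, after multiplication by $\lambda^{-n}$, to $\sum_k w_k\partial_{a_k}M^{(\infty)}$ and $[\mathsf M\cdot w,M^{(\infty)}]$.

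The most delicate point, and the step I expect to be the main obstacle, is that the first term disappears: $\lambda^{-n}\partial_{a_n}(\mathsf M\cdot w)\to 0$. For this I would use that $M^{(k)}=P_+(\lambda^kM^{(\infty)})$ is a polynomial in $\lambda$ of degree $k-1$, so $\mathsf M\cdot w$ and hence $\partial_{a_n}(\mathsf M\cdot w)$ is a polynomial whose degree is bounded by the fixed, finite support of $w$ and is therefore independent of $n$; multiplying by $\lambda^{-n}$ pushes its support to powers $\to-\infty$, so it vanishes in the completed topology. Collecting the three limits turns $\lim_{n\to\infty}\lambda^{-n}d^\mathsf{M}\mathsf M(w,\partial/\partial a_n)=0$ into the explicit form of \eqref{ZCR-inf},
\begin{align*}
	\sum_k w_k\,\partial_{a_k}M^{(\infty)}+\big[M^{(\infty)},\mathsf M\cdot w\big]=0.
\end{align*}

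For uniqueness and the specialization, I would read this relation as a linear equation for the unknown $M^{(\infty)}$ with datum $\mathsf M\cdot w$, whose principal part is the directional derivative $\sum_k w_k\partial_{a_k}$ and whose zeroth-order part is $\mathrm{ad}_{\mathsf M\cdot w}$. Expanding in powers of $\lambda$, the commutator with the polynomial $\mathsf M\cdot w$ couples the coefficients $M_i$ in a way that is triangular with respect to the grading, so the relation fixes them recursively from the prescribed leading data; equivalently, uniqueness follows from standard uniqueness for linear first-order equations along the flow of $w$. Finally, setting $w=\partial/\partial a_i$ gives $w_k=\delta_{ki}$ and $\mathsf M\cdot w=M^{(i)}$, so the relation reduces to $\partial_{a_i}M^{(\infty)}+[M^{(\infty)},M^{(i)}]=0$, which is exactly \eqref{ZCR-relation}. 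Besides the limiting argument, the part requiring the most care is pinning down precisely what ``uniquely determines'' means, i.e.\ which leading data on $\mathsf M\cdot w$ is held fixed.
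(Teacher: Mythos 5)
Your proof is correct and follows essentially the same route as the paper: contract the curvature, rescale by $\lambda^{-n}$, use the key observation $\lambda^{-n}P_+(\lambda^{n}M^{(\infty)})\to M^{(\infty)}$ together with the vanishing of the rescaled fixed-degree polynomial term, and pass to the limit. You in fact go slightly further than the paper's proof, which treats only the case $w=\partial/\partial a_i$ and is silent on uniqueness and on the topology in which the limits are taken; your handling of a general finitely-supported constant $w$, the explicit adic topology, and the (correctly flagged as delicate) recursive-uniqueness discussion fill in details the paper leaves implicit.
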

\begin{proof}
	By rewriting the ZCR using the definition of $M^{(j)}$ in the case of $w=\frac{\partial} {\partial a_i}$, we have
	\begin{align}
		\frac{\partial}{\partial a_j}P_+(\lambda^iM^{(\infty)})-\frac{\partial} {\partial a_i}P_+(\lambda^jM^{(\infty)}) + [P_+(\lambda^iM^{(\infty)}), P_+(\lambda^jM^{(\infty)})]=0. 
		\label{ZCR-inter}
	\end{align}
	Then, noticing that 
	\begin{align*}
		\lim_{j\to \infty}  \lambda^{-j}P_+(\lambda^jM^{(\infty)}) = M^{(\infty)},
	\end{align*}
	multiplying \eqref{ZCR-inter} by $\lambda^{-i}$, taking the limit $j\to\infty$ and together with $\lambda^{-j}P_+(\lambda^iM^{(\infty)})\to 0$, we obtain the \eqref{ZCR-relation}.
\end{proof}

The explicit computation of the $M_j$ can be difficult depending on the Lie algebra $\mathfrak g$, $i$ and the number of $M_j$ that one wants to obtain. 
The $M_i$ will also depend on $M^{(k)}$ through the $\frac{\partial}{\partial a_k}$ derivatives, thus we leave the strict first jet bundle construction of this theory. 
One can also think of selecting a vector field $w$ which could contains more elements and thus expect to obtain a higher dimensional integrable hierarchy.
This is still an open problem because slices must first be extended to volumes and nothing is clear anymore. 
Up to this point we did not talk about dynamics of any of these fields. 
The ZCR \eqref{ZCR-relation} will actually be the momentum-velocity relation needed for any dynamical interpretation. 
This will be done in the next section, depending on which formalism one wants to use.

\subsubsection{From ZCR to Lie-Poisson or Euler-Poincar\'e equations}

In this section we will show that the Euler-Poincar\'e or Lie-Poisson equations are the same as the ZCR after some preparatory steps. 
The first step is to select an integer $k$ and compute the functions $M^{j}(M^{(k)}),\ \forall j$ using the ZCR \eqref{ZCR-relation} on the slice $N_{k\infty}$.
The second step is to select a $n$, thus to fix a slice $N_{kn}$ where the 1+1 PDE will live. 
There is a third step before making any dynamical interpretation, namely decide what will be the space and time variables. 
There are only two choices and they will lead to two different formalisms: Hamiltonian if $x:=a_k$, or Lagrangian if $t:=a_k$. 

\begin{theorem}
Let the connection $\mathsf M$ satisfy the ZCR $d^\mathsf{M}\mathsf {M}=0$. 
For each slice $N_{ij}$ the restricted ZCR $d^{\mathsf M^{(ij)}}\mathsf{M}^{(ij)}=0$, where $\mathsf{M}^{(ij)}= M^{(i)}da_i+M^{(j)}da_j$, has an equivalent Hamiltonian or Lagrangian formulation:
\begin{enumerate}
	\item \textbf{Hamiltonian}: If the space variable is $x:= a_i$, for $L:= M^{(i)}$, the ZCR \eqref{ZCR-relation} on the slice $N_{i\infty}$ implicitly defines every $M_j$ as a function of $L$. 
		If the time variable is then $t:= a_j$, the Hamiltonian function is defined by $h_j(L)= \int \langle L,M^{(j)}(L)\rangle dx$ and the associated Lie-Poisson equation is the ZCR on the slice $N_{ij}$; see section \ref{LP-sect}. 
	\item \textbf{Lagrangian}: If the time variable is $t:= a_j$, for $M:= M^{(j)}$, the ZCR \eqref{ZCR-relation} on the slice $N_{j\infty }$ implicitly defines every $M_i$ as a function of $M$. 
		If the space variable is then $x:= a_i$, the Lagrangian function is defined by $l_{i}(M)= \int \langle M,M^{(i)}(M)\rangle dx$ and the associated Euler-Poincar\'e  equation is the ZCR on the slice $N_{ij}$; see section \ref{EP-sect}.
\end{enumerate}
\end{theorem}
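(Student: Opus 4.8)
The plan is to reduce both assertions to a single variational identity and then to observe that, once that identity is in hand, the Lie--Poisson and Euler--Poincar\'e equations are \emph{literally} the restricted ZCR. Both dynamical equations derived in Sections \ref{LP-sect} and \ref{EP-sect} are already written in zero-curvature form: the simplified Lie--Poisson equation reads $\partial_t L - \partial_x \frac{\delta h}{\delta L} = \mathrm{ad}_{\frac{\delta h}{\delta L}} L = [\frac{\delta h}{\delta L}, L]$, while the restricted ZCR on $N_{ij}$ obtained from \eqref{gZCR} with $x:=a_i$, $t:=a_j$ and $L:=M^{(i)}$ is $\partial_t L - \partial_x M^{(j)} + [L, M^{(j)}] = 0$. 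Comparing the two, the equivalence holds precisely when $\frac{\delta h_j}{\delta L} = M^{(j)}$. Thus the entire content of part (1) is the claim that the functional derivative of $h_j(L) = \int \langle L, M^{(j)}(L)\rangle dx$ with respect to the field $L$ returns $M^{(j)}$, and, symmetrically, part (2) reduces to $\frac{\delta l_i}{\delta M} = M^{(i)}$.

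For the Hamiltonian case I would first fix $L = M^{(i)}$ and recall from Proposition \ref{prop} that the constraint \eqref{ZCR-relation} on $N_{i\infty}$, namely $\partial_x M^{(\infty)} + [M^{(\infty)}, L] = 0$, determines $M^{(\infty)}$ --- and hence every $M^{(k)} = P_+(\lambda^k M^{(\infty)})$ --- as a function of $L$. Differentiating gives $\delta h_j = \int \langle \delta L, M^{(j)}\rangle dx + \int \langle L, \delta M^{(j)}\rangle dx$, so the natural route is to show that the second integral $\int \langle L, \delta M^{(j)}\rangle dx$ vanishes. To this end I would linearise the constraint \eqref{ZCR-relation} to express $\delta M^{(\infty)}$, and therefore $\delta M^{(j)} = P_+(\lambda^j \delta M^{(\infty)})$, in terms of $\delta L$, and then exploit two structural facts: that the residue pairing makes the subalgebras of positive and of strictly negative powers isotropic, so that only ``overlapping'' powers survive when $P_+$ and $P_-$ are inserted, and that the residual terms assemble into $x$-total derivatives which integrate to zero under the periodic or vanishing boundary conditions fixed in Section \ref{theory-reduction}. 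Once $\frac{\delta h_j}{\delta L} = M^{(j)}$ is established, substitution into \eqref{LP} yields the restricted ZCR with no further computation, and the converse direction is automatic since each manipulation is an equality.

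The Lagrangian case is the mirror image under the exchange $i \leftrightarrow j$ together with Hamiltonian $\leftrightarrow$ Lagrangian: taking $M := M^{(j)}$ with the constraint \eqref{ZCR-relation} now imposed on $N_{j\infty}$ to define each $M^{(k)}$ as a function of $M$, the same residue-and-integration-by-parts argument gives $\frac{\delta l_i}{\delta M} = M^{(i)}$, and inserting this into the simplified Euler--Poincar\'e equation $\partial_t \frac{\delta l}{\delta M} = \mathrm{ad}_M \frac{\delta l}{\delta M} + \partial_x M$ from \eqref{EP} reproduces the restricted ZCR on $N_{ij}$. Throughout I would keep careful track of the identification of $\mathrm{ad}^*$ with $-\mathrm{ad}$ under the Killing form, which is exactly what turns the coadjoint terms of \eqref{LP} and \eqref{EP} into the commutators appearing in \eqref{gZCR}; the signs must be matched consistently with the orientation of the slice, encoded in the antisymmetry $da_i \wedge da_j$.

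The main obstacle is the variational identity $\frac{\delta h_j}{\delta L} = M^{(j)}$ and its Lagrangian counterpart, because $L$ enters $M^{(j)}$ only implicitly, through the whole tower of coefficients $M_k$ fixed recursively by \eqref{ZCR-relation}. The delicate point is to show that all contributions coming from this implicit dependence cancel, i.e. that $\int \langle L, \delta M^{(j)}\rangle dx$ collapses to boundary terms. I expect this to follow from the isotropy of the two loop subalgebras under the residue pairing combined with the constraint \eqref{ZCR-relation} --- the same mechanism that makes quantities such as $\langle M^{(\infty)}, \lambda^k M^{(\infty)}\rangle$ behave like conserved densities in the classical $R$-matrix theory. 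Verifying that the leftover terms are genuine $x$-derivatives, rather than merely $\lambda$-graded cancellations, is where the boundary conditions and the precise convention for $P_\pm$ enter in an essential way.
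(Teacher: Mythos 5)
First, a point of comparison: the paper itself contains no proof of this theorem. It is asserted by construction, with pointers back to sections \ref{LP-sect} and \ref{EP-sect}, and the functionals $h_j$ and $l_i$ are never actually evaluated or differentiated anywhere in the paper (the examples write down the ZCRs directly). So there is no ``paper proof'' to match your route against; you are attempting to fill a genuine lacuna. Your reduction of both parts to the gradient identities $\frac{\delta h_j}{\delta L}=M^{(j)}$ and $\frac{\delta l_i}{\delta M}=M^{(i)}$ is exactly the implicit logic of the statement: the Lie--Poisson equation \eqref{LP} and the Euler--Poincar\'e equation \eqref{EP} with the derivative cocycle already have zero-curvature form, so everything hinges on identifying the functional derivative with the other leg of the slice. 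That skeleton, including the bookkeeping of $\mathrm{ad}^*$ versus $\mathrm{ad}$ under the Killing form, is correct.

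The gap is in the one tool you propose for proving the gradient identity, and it is not a small one: the mechanism is self-defeating. Both $L=M^{(i)}$ and $M^{(j)}=P_+(\lambda^j M^{(\infty)})$ lie in the non-negative-power subalgebra, and that subalgebra is isotropic for the residue pairing (the condition $p+q=-1$ has no solutions with $p,q\ge 0$). Hence, under the very pairing whose isotropy you invoke, $h_j(L)=\int \langle L,M^{(j)}(L)\rangle dx$ vanishes identically, its functional derivative is zero, and the identity $\frac{\delta h_j}{\delta L}=M^{(j)}$ is false rather than proved; the same applies to $l_i$. Before any cancellation argument can start, the functional must be given a non-vacuous meaning, and there are essentially two repairs: either read $\langle\cdot,\cdot\rangle$ in $h_j$ and $l_i$ coefficient-wise in powers of $\lambda$ (in which case the isotropy argument is unavailable, and the vanishing of $\int\langle L,\delta M^{(j)}\rangle dx$ must be established by hand by linearizing the recursion \eqref{ZCR-relation}, which involves inverting the nonlocal operator $\partial_x+\mathrm{ad}_L$), or replace $M^{(j)}$ by the full tail and take $h_j(L)=\int\langle L,\lambda^j M^{(\infty)}(L)\rangle dx$, the Adler--Kostant--Symes-type Hamiltonian, whose gradient computation rests on ad-invariance and on the fact that $\langle M^{(\infty)},M^{(\infty)}\rangle$ is constant in $x$ along \eqref{ZCR-relation}, not on isotropy alone. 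You flag this region as delicate, but flagging is not resolving: the unproven lemma is the entire mathematical content of the theorem, and the specific mechanism you name for it cannot work as stated.
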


This theorem relates the well-known Hamiltonian formulation of integrable hierarchies on loop algebras (see \cite{newell1985solitons,semenov1983classical}) to a new Lagrangian formulation through the generalized ZCR structure. 
Indeed, by using an abstract space-time manifold without any particular time or space variable, we were able to find the Lagrangian interpretation by avoiding an inverse Legendre transformation, which is the main obstacle to a comprehensive Lagrangian formalism of integrable systems. 
We will illustrate this theory in the next section but before that, we want to deeper address the question of the Legendre transformation. 

\subsubsection{Legendre transformation}\label{leg-trans}

We will briefly explain here how to relate the Lagrangian and Hamiltonian formalisms in our context with the Legendre transformation.
In the Hamiltonian formalism the space variable $x:=a_k$ is fixed once and for all and the hierarchy is then spanned with the time variable $t:=a_n$. 
The conjugate velocity to the momentum $L:=M^{(k)}$ is found by solving the recursive equations given by the ZCR \eqref{ZCR-relation} in the slice $N_{n\infty}$, up to the order $n$. 
For the standard hierarchies, $n$ is always larger than $k$ and the conjugate velocity will therefore contain up to $n$ spatial derivatives. 
In the Lagrangian formalism, the time variable is fixed and the space variable has to be selected using the same procedure. 
Therefore, instead of spanning the integrable hierarchy with the time variable, it is spanned with the space variable.  
The crucial point is that in the Lagrangian setting, the equation will be written with more independent fields than it would have been in the Hamiltonian setting.
Hopefully, the Euler-Poincar\'e equations can be simplified into a single one for the momentum (or $i$ equations if there are $i$ momenta).
This procedure of simplification can be seen as a Legendre transformation because the resulting simplified equation will be the same as if we started in the Hamiltonian side. 
Even if from a computational point of view the Lagrangian interpretation does not really differ from the standard approach, from a formal point of view it is of course very different and will be crucial in the development of the deformation of integrable hierarchies later in section \ref{sobolev-deform}. 

\subsection{Application for standard hierarchies}\label{application}

In this section we will be show how to use this formalism to recover integrable hierarchies such as the AKNS hierarchy. 
The main difficulty is to solve the ZCR or equivalently solving an implicit recursive system of equations. 

\subsubsection{The AKNS hierarchy}\label{KdV-example}

The Lie algebra for the AKNS hierarchy is $\mathfrak{sl}(2)$ and the slices are given by $x:= a_1$ for any other times $t:=a_i$. 
For instance, selecting $t:=a_2$ will give the NLS flow and $t:=a_3$ a flow which can be reduced to KdV/mKdV equation.
We first recall the fundamentals of this Lie algebra. 
The basis matrices are
\begin{align}
	e= 
	\begin{bmatrix}
		0 & 1 \\
		0 & 0 
	\end{bmatrix},\qquad 
	f= 
	\begin{bmatrix}
		0 & 0 \\
		1 & 0 
	\end{bmatrix},\qquad
	h= 
	\begin{bmatrix}
		1 & 0 \\
		0 & -1
	\end{bmatrix},
\end{align}
and the commutations relations
\begin{align}
	[e,f]= h,\qquad [e,h] = 2e,\qquad [f,h]= -2f. 
\end{align}
We will also use the notation $\xi^\|$ for the component of $\xi$ in the Cartan subalgebra (which is an element $A$ proportional to $h$) and $\xi^\perp= \xi-\langle \xi,h\rangle \frac{h}{\langle h,h\rangle}$ for the complement of $\xi^\|$.

\paragraph{Hamiltonian derivation}

In the Hamiltonian formalism one has to fix the space variable $x=a_1$, thus the $L$ operator is
\begin{align}
	L= \lambda M_0 + M_1.
\end{align}
By using the ZCR in the $N_{1\infty}$ slice we can express every other $M_i$, especially $M:= M^{(3)}=\lambda^3 M_0 + \lambda^2 M_1 + \lambda M_2 + M_3 $ in terms of $M_1$ only because $M_0$ will be constant.
This amounts to solve the recursive relations defined by the ZCR \eqref{ZCR-relation}
\begin{align}
\begin{split}
	\lambda^0 &: \partial_xM_0 = 0 \\
	\lambda^{-1}&: \partial_xM_1 + [M_2,M_0]= 0 \\
	\lambda^{-2}&: \partial_xM_2 + [M_2,M_1]+ [M_3,M_0]= 0 \\
	&\vdots\\
	\lambda^{-i}&: \partial_xM_i + [M_i,M_1]+ [M_{i+1},M_0]= 0 \\
\end{split}
\end{align}
From the first equation $M_0$ is a constant in $a_1$ and there are two choices for the value of $M_0$, an element proportional to the Cartan subalgebra or another element. 
The first, which is the most common in the literature, is related to the first grading of the underlying Kac-Moody algebra, the second choice corresponds to the second grading; see \cite{newell1985solitons} for more details linked to the Kac-Moody algebras. 
The two formulations are equivalent so we will stick to the first grading in this work and denote $M_0=A$ where $A=ih$ is then particular element proportional to the Cartan subalgebra.
At this stage, one can remark that even if $M$ is considered to be the independent variable, the ZCR \eqref{ZCR-relation} still imposes some constraints on it. 
In this case the highest power must be constant.
The third equation $\partial_xM_1+ [A,M_2]= 0 $ also implies that $M_1$ has no component in the Cartan subalgebra.
$M_2^\perp$ can then be computed using the fact that $[A,[A,\xi]]= -4 \xi^\perp$ for arbitrary $\xi$ and reads	$M_2^\perp = -\frac{1}{4} [A,\partial_x M_1^\perp]$. 
The parallel part of $M_2$ is found from the next equation by projecting out the perpendicular part
\begin{align*}
	\partial_xM_2^\| &=- [M_2^\perp,M_1^\perp] =\frac{1}{4}[ [A,\partial_x	M_1^\perp],M_1^\perp] = -\frac{1}{8}\partial_x [M_1^\perp, [A,M_1^\perp]],
\end{align*}
where we used the Jacobi identity for the last step.  
With the vanishing (or periodic)  boundary conditions, we have $M_2^\|= -\frac{1}{8}[M_1^\perp, [A,M_1^\perp]]$. 
From the very same equation, $M_3^\perp$ can be calculated and is given by
\begin{align*}
	M_3^\perp = -\frac{1}{4}[A,\partial_xM_2^\perp] - \frac{1}{4}[A,[M_2^\|,M_1^\perp] ]. 
\end{align*}
By denoting $U:=M_1$, the ZCR on the slice $N_{13}$, or equivalently the Lie-Poisson equation reads
\begin{align}
	\partial_tU^\perp  +\frac14 \partial_x^3U^\perp- \frac{1}{32}\partial_x [ A,[U^\perp,[ U,[A,U]]]] + \frac14 [U^\perp,[U^\perp,\partial_xU^\perp]] =0,
\end{align}
where the last two terms are in fact the same. 
This is a dynamical equation for the field $U$ only and, by restricting its form, the KdV, mKdV or coupled KdV (cKdV) equations are recovered.
Here is a summary, after rescaling time as $t\to 4 t$,
\begin{align}
	U_{KdV} = 
	\begin{bmatrix}
		0 & u\\
		1 & 0 
\end{bmatrix}&\qquad \Rightarrow \qquad u_t  +6 uu_x + u_{xxx}=0,\label{KdV}\\
	U_{mKdV}= 
	\begin{bmatrix}
		0 & u\\
		\sigma u & 0 
	\end{bmatrix}&\qquad \Rightarrow \qquad u_t + 6\sigma  u^2u_x + u_{xxx}=0\label{mKdV},\\
	U_{cKdV}= 
	\begin{bmatrix}
		0 & u\\
		v & 0 
	\end{bmatrix}&\qquad \Rightarrow \qquad u_t +6uvu_x +u_{xxx}=0,\ \  v_t + 6  uvv_x + v_{xxx}=0\label{CmKdV},
\end{align}
where $\sigma=\pm 1$ will give the focusing or defocussing mKdV.  
One can check that the first flow on $N_{12}$ is indeed the NLS equation (focusing or defocussing for $\sigma=\pm 1$) is found with  
\begin{align}
	U_{NLS}= 
	\begin{bmatrix}
		0 & u \\
		\sigma \overline u & 0 
	\end{bmatrix}\qquad \Rightarrow\qquad 
	iu_t + u_{xx} + \sigma u |u|^2 =0.
	\label{NLS}
\end{align}
The main reason why we only look at the $a_2$ flow with the NLS reduction is that the other two reductions for mKdV/KdV are trivial. 
This can be seen in the calculation of the flow from the fact that one needs complex fields for a non-vanishing second flow $a_2$.
This also explain the use of a complex constant $A=ih$, in order to also include the NLS equation in the AKNS hierarchy.

\paragraph{Lagrangian derivation}

In the Lagrangian formalism one has to fix $t=a_3$ and then the $M$ operator is
\begin{align}
	M= \lambda^3 A + \lambda^2 U^\perp + \lambda V + W
\end{align}
where $A$ is still constant and $U^\perp,V,W$ are three independent fields with values in the Lie algebra $\mathfrak g$ and not the loop algebra. 
Solving the ZCR in order to find $L= M_1$ is trivial and gives
\begin{align}
	L= \lambda A + U^\perp.
\end{align}
Then, the Euler-Poincar\'e equation, or ZCR in $N_{13}$ expands in three equations for the $\perp$ part
\begin{align}
	\begin{split}
	\partial_t U^\perp - \partial_x W^\perp +[U^\perp,W^\|] &=0\\
	-\partial_x V^\perp + [U^\perp,V^\|] + [A,W^\perp] &=0\\
	-\partial_x U^\perp + [A,V^\perp] &=0,	
	\end{split}
\end{align}
and two for the $\|$ part
\begin{align}
	\begin{split}
	 - \partial_x W^\| +[U^\perp,W^\perp] &=0\\
	-\partial_x V^\| + [U^\perp,V^\perp] &=0.
	\end{split}
\end{align}
One can easily check that this set of equations is equivalent to the KdV flow derived in the Hamiltonian formalism by expressing $V,W$ in term of $U$ only. 
This computation is the Legendre transformation from the velocity $(U,V,W)$ to the momentum $U$, as described in section \ref{leg-trans}. 

\subsubsection{$SO(3)$-hierarchy}

This hierarchy, based on $SO(3)$, is not very well known in the literature and has recently been studied, for instance, by \cite{ma2013soliton, ma2014integrable}. 
The Cartan subalgebra can be taken to be any element of the basis of the Lie algebra. 
We will choose $e_3$, where the basis of $\mathfrak{so}(3)$ is the standard $(e_1,e_2,e_3)$ with the commutation relations
\begin{align}
	[e_1, e_2]= e_3, \qquad [e_1, e_3] = e_2\quad\mathrm{and}\quad  [e_2, e_3] = e_1.
\end{align}
Following the AKNS scheme, we use $A=ie_3$, where $e_3$ is taken as the Cartan subalgebra basis vector. 

\paragraph{First flow of the hierarchy}

The Euler-Poincar\'e equation is found using the usual two elements
\begin{align}
	L= \lambda A + U^\perp,\qquad M= \lambda^2 A + \lambda U^\perp + V,
\end{align}
and reads
\begin{align}
	\begin{split}
	\partial_t U^\perp - \partial_x V^\perp +[U^\perp, V^\|] &=0\\
	-\partial_xU^\perp + [A, V^\perp] &=0	\\
	-\partial_x V^\| + [U^\perp, V^\perp] &=0.
	\end{split}
\end{align}
After computing the Legendre transformation, $V$ can be expressed as
\begin{align}
	V^\perp&=-[\partial_xU^\perp,A],\qquad V^\|=- \frac12  [U^\perp, [U^\perp, A]].
\end{align}
The 1+1 PDE, when $U= ue_1+\overline ue_2$, finally reads 
\begin{align}
	i\partial_t u &=\overline u_{xx} + \frac12 (u^2+\overline u^2)\overline u.
	\label{SO3-NLS}
\end{align}
This equation is a modification of the nonlinear Schr\"odinger equation where $|u|^2$ is replaced by the difference $\mathrm{Re}(u)^2-\mathrm{Im}(u)^2$ together with appropriate conjugations. 
This equation seems to be new, but does not have the $U(1)$ phase symmetry.  
The other choice of $A$, namely $A=e_3$ and $U=ue_1+ve_2$ leads to two coupled equations already derived in \cite{ma2013soliton, ma2014integrable}. 

\paragraph{Second flow of the hierarchy}

The second flow has the now usual elements
\begin{align}
	L= \lambda A + U^\perp,\qquad\mathrm{and}\qquad   M= \lambda^3 A + \lambda^2 U^\perp + \lambda V + W.
\end{align}
The Euler-Poincar\'e equation is then
\begin{align}
	\begin{split}
	\partial_t U^\perp - \partial_x W^\perp +[U^\perp, W^\|] &=0,\qquad - \partial_x W^\| + [U^\perp, W^\perp] =0\\
	-\partial_x V^\perp + [U^\perp,V^\|] +[A, W^\perp] &=0,\qquad -\partial_x V^\| + [U^\perp,V^\perp ]=0\\
	-\partial_xU^\perp + [A, V^\perp ]&=0.
	\end{split}
\end{align}
After the Legendre transformation, we obtain 
\begin{align}
	\begin{split}
	V^\perp&= -[\partial_xU^\perp,A],\qquad V^\|= -\frac12 [U^\perp,[U^\perp,A]]\\
	W^\perp& = \partial_x^2U^\perp + \frac12 [A, [ U^\perp,[U^\perp,[U^\perp , A]]]],\qquad W^\|= [ U^\perp, \partial_xU^\perp ].
	\end{split}
\end{align}
With $U= ue_1+ve_2$, the coupled equations for $u$ and $v$ read
\begin{align}
	\begin{split}
	\partial_{t_3} u&=  u_{xxx} -\frac12 (3u^2-v^2) u_x- 2 uvv_x   \\
	\partial_{t_3} v&=  v_{xxx} +\frac12 (u^2 -3v^2)v_x - 2 uvu_x.
	\end{split}
	\label{CmmKdV}
\end{align}
This equation is then a coupled mKdV equation for $u$ and $v$, also found in \cite{ma2013soliton} and previously in the study of coupled modified KdV equation of \cite{tsuchida1998coupled}.
One can further simplify this equation by setting $u=v$ and recover the modified KdV equation. 
This illustrates the fact that the first members of different hierarchies are sometimes the same, owing in this case to the isomorphism between the Lie algebras $\mathfrak{sl}(2)$ and $\mathfrak{so}(3)$.

\section{Deformations of integrable hierarchies}\label{sobolev-deform}

Apart from classical completely integrable systems, there exist other interesting systems such as the Camassa-Holm equation, first derived in \cite{camassa1993} and which admits peaked solitons, or peakons. 
It is of common agreement that their complete integrability is of different flavour than classical integrable systems.
From our point of view, the main difference is that they do not fit into the present loop group approach.
Notice that a successful try had been made by \cite{schiff1998camassa} where he mapped the CH equation to a negative flow of the KdV hierarchy using a reciprocal transformation.
Indeed, the ZCR of the CH equation does not have a constant element for the highest power of $\lambda$, but the dynamical field itself as the ZCR is given by (see for example \cite{hone2003prolongation})
\begin{align}
	L = 
	\begin{bmatrix}
		0 & 1 \\
		-m\lambda + \frac14 & 0 
	\end{bmatrix},\qquad M= 
	\begin{bmatrix}
		\frac12u_x & -u-\frac12\lambda^{-1} \\
		um\lambda + \frac14 u - \frac18 \lambda^{-1} & - \frac12 u_x
	\end{bmatrix}.
\end{align}
Notice that we used the momentum $m=u-\alpha^2 u_{xx}$ where $\alpha^2$ is the length scale parameter. 
By letting $\alpha^2\to 0$, the dispersive CH equation reduces to the KdV equation but the ZCR of the dispersive CH equation will not converge to the ZCR of the KdV equation, written in the AKNS matrix form. 
This means that the CH equation has a different integrability flavour than the KdV equation. 
On top of that, their is no AKNS hierarchy for CH type equations despite their close relationship with the AKNS hierarchy. 
These differences are important because, from the equation standpoint, CH is a deformation of KdV, but from the integrable system theory, they seem to have nothing to do with each other. 
We will hereafter show how to deform the integrable system theory developed above such that the CH equation, amongst others, can be recovered. 
This will lead us to a definition of a weak integrability in the next section. 

\subsection{Weak integrability}

The present formulation of integrable systems using Lagrangian mechanics is the key ingredient for a theory of deformation of integrable systems. 
By deformation, we mean replacing the $L^2$ norm by the $H^1$ norm. 
This procedure introduces a length scale parameter $\alpha$ such that when $\alpha\to 0$ the $H^1$ norm becomes the $L^2$ norm.
We first recall that, in classical mechanics, the metric can be defined by the Lagrangian if it is in a quadratic form.
We refer to \cite{marsden1999book} for a precise account of this fact. 
We do not have a proper quadratic form here because the momentum is related to the velocity through the complicated recursive relations \eqref{ZCR-relation} as shown before. 
Nevertheless it is the best place to introduce the $H^1$ norm. 
The deformed Lagrangian can therefore be defined with the Sobolev norm as
\begin{align}
	\begin{split}
	l_{ij}^{H^1}(M^{(j)})&:= \int \langle M^{(j)},M^{(i)}(M^{(j)})\rangle+\alpha^2\langle \partial_xM^{(j)},\partial_xM^{(i)}(M^{(j)})\rangle dx\\
	&= \int \langle M^{(j)},(1-\alpha^2\partial_x^2)M^{(i)}(M^{(j)})\rangle dx.
	\end{split}
\end{align}
For the last equality we rewrote the norm using an integration by parts and vanishing boundary conditions. 
This exhibits the Helmholtz operator $\Lambda:= 1-\alpha^2\partial_x^2$ that we will use throughout the rest of this work.  
This type of deformation is common in the literature and allows singular solutions to exist.  
Indeed, the Green's function of $\Lambda$, given by $e^{|x|/\alpha}/(2\alpha)$ is is the famous peakon solution of the dispersionless CH equation. 
We refer to \cite{camassa1993,holm2005momentum} and subsequent works for different aspects of peakon solutions. 

After having modified the Lagrangian with the Sobolev norm, the standard Euler-Poincar\'e equation \eqref{EP} can be derived and reads
\begin{align}
	\partial_t \Lambda M^{(i)} = [M^{(j)},\Lambda M^{(i)}] + \partial_x M^{(j)},\qquad \forall i,
	\label{EP-CH}
\end{align}
for $\Lambda:= 1-\alpha^2\partial_x^2$.
We want to emphasize that $M^{(j)}$ had already been fixed and that the ZCR \eqref{ZCR-relation} on $N_{j\infty}$ had been used to express the corresponding momentum $M^{(i)}(M^{(j)})$ before the Euler-Poincar\'e equation was derived.
This modification does not change the implicit relation $M^{(i)}(M^{(j)})$ but changes only the Euler-Poincar\'e equation. 
From this modified Euler-Poincar\'e equation \eqref{EP-CH}, one can try to compute the Legendre transformation in order to, for instance, recover the CH equation.
The deformations will be computed in section \ref{CH-example}, but first, one has to be careful with the integrability property or, saying differently how to come back to the ZCR interpretation from the Euler-Poincar\'e equations \eqref{EP-CH}.
Indeed, this is not a trivial operation and it can only be done after defining a weak ZCR, or projected ZCR. 
For convenience here, we will only work with slices of the form $N_{i1}$ where $x:=a_1$ as flows with higher order space variables are more complicated and not much used for physical applications, except maybe for the derivative NLS equation with $t=a_4$ and $x=a_2$; see \cite{flaschka1983kac,kaup1978exact}.

\begin{definition}
	Let $\mathcal P_{1,k}^A:\widetilde{\mathfrak g}\to\widetilde{\mathfrak g}$ be a projection operator for polynomial loop algebras depending only on $A$, the element in the Cartan subalgebra of $\mathfrak g$ and $k$, which defines a slice $N_{1k}$ for the equation in considerations.
For an arbitrary $Z=\sum_{i=-\infty}^kZ_i\in\widetilde{\mathfrak g}$, the projection is given by
\begin{align}
	\mathcal P_{1,k}^A(Z)= Z- \lambda^{k}P_+(\lambda^{-k}Z)- \lambda^{k-1}P_+(\lambda^{1-k}\langle Z, A\rangle A/\langle A,A\rangle),
\end{align}
where $P_+$ stands for the projection onto positive powers of $\lambda$. 
\end{definition}

This projection corresponds to removing the Cartan subalgebra element of the $Z_{k-1}$ and the full element $Z_k$.  
We can then naturally define the notion of weak integrability.

\begin{definition}
	The Lie algebra value two form $( \partial_tM^{(1)}-\partial_xM^{(k)}+[M^{(1)},M^{(k)}])ds\wedge dt$ is said to be a weak ZCR if its projection under $\mathcal P_{1,k}^A$ vanishes, namely 
\begin{align}
	\mathcal P_{1,k}^A( \partial_tM^{(1)}-\partial_xM^{(k)}+[M^{(1)},M^{(k)}])= 0,
	\label{WZCR}
\end{align}
or equivalently
\begin{align}
	\partial_tM^{(1)}-\partial_xM^{(k)}+\mathcal P_{1,k}^A( [M^{(1)},M^{(k)}])= 0. 
\end{align}
If the weak ZCR is equivalent to a PDE, the PDE is said to be weakly integrable. 
\end{definition}
From this definition, we see that apart from being proportional to a power of $\alpha$ the projected terms are proportional to the two highest powers in $\lambda$ that appear in the ZCR. 
The equations proportional to these powers of $\lambda$ are always naturally verified if $\alpha=0$, indeed, we will see that one of them is of the form $[U,U]=0$, for a matrix $U$.
Notice that the weak ZCR \eqref{WZCR} is still written in term of the velocities in the Lagrangian, but can be Legendre transformed to be expressed with the momentum fields only. 
After the Legendre transformation, the ZCR will still be a weak ZCR but it will be equivalent to some nonlinear PDE, as the CH equation. 
In order to completely understand the integrability of these systems, the isospectral problem must be understood. 
In the non-deformed case, the isospectral problem is standard but after the deformation its correct formulation is still an open problem.

We will display the exact projected terms for each equation later, but first we want to sketch a possible research direction for this problem. 
For simplicity, we denote by $L$ and $M$ the two loop algebra elements of the ZCR. 
The trick is similar for the Manakov triple used for $2+1$ integrable PDEs, namely instead of projecting out some terms of the ZCR, one can recast it inside the commutator as
\begin{align}
	\Lambda L_t - M_x + [ \Lambda L, M- \lambda^k L]= 0,
\end{align}
if $k$ corresponds to the flow of $M= M^{(k)}$. 
Then, a direct calculation gives the spectral problem for a wavefunction $\psi(x,t,\lambda)$
\begin{align}
	\begin{split}
	\psi_t&= M\psi- \lambda^kL\psi\\
	\psi_x&= \Lambda L\psi.
	\end{split}
\end{align}
This spectral problem is then isospectral if and only if
\begin{align*}
	(L\psi)_x= (L_x+ L\Lambda L)\psi= 0.
\end{align*}
It remains an open problem to interpret and solve this spectral problem as it is rather different from the Manakov triple. 
We will not address this problem here but rather focus on the geometrical interpretation of the deformations of integrable systems. 
\subsection{Deformed AKNS hierarchy}\label{CH-example}

Following the derivation of the AKNS hierarchy done in section \ref{KdV-example} but with the deformed Lagrangian we will derive the Camassa-Holm equation as well as other equations. 
In the Lagrangian formalism one has to fix $t:=a_3$ and the $M$ operator is then
\begin{align}
	M= \lambda^3 A + \lambda^2 U^\perp + \lambda V + W,
\end{align}
where $A$ is still constant and $U^\perp,V,W$ are three independent fields. 
Solving the ZCR in order to find $L= M^{(1)}$ is trivial and, after applying the $\Lambda:= 1-\alpha^2\partial_x^2$ operator to $L$, we readily have
\begin{align}
	\widehat L:= \Lambda L = \lambda A + \Lambda U^\perp. 
\end{align}
The Euler-Poincar\'e equation, or ZCR in the slice $N_{13}$, expands in four equations for the $\perp$ part
\begin{align}
	\begin{split}
	\lambda^0&: \partial_t \Lambda U^\perp - \partial_x W^\perp +[\Lambda U^\perp,W^\|] =0\\
	\lambda^1 &: -\partial_x V^\perp + [\Lambda U^\perp,V^\|] + [A,W^\perp] =0\\
	\lambda^2 &:-\partial_xU^\perp + [A,V^\perp] =0\\
	\lambda^3&: [A,\Lambda U^\perp]+[U^\perp,A]= 0 
	\end{split}
	\label{CH-sysperp}
\end{align}
and three for the $\|$ part
\begin{align}
	\begin{split}
	\lambda^0&: - \partial_x W^\| +[\Lambda U^\perp,W^\perp] =0\\
	\lambda^1&: -\partial_x V^\| + [\Lambda U^\perp,V^\perp] =0\\
	\lambda^2&: [\Lambda U^\perp,U^\perp]=0
	\end{split}
	\label{CH-syspara}
\end{align}
where the last equation of both systems are no more trivially satisfied and has to be projected out with the projection operator $\mathcal P_{13}^A$. 
Indeed, one can check that the projection exactly removes these two terms.
This example illustrates the fact that the lack of complete integrability is rather small for the deformed equations and that with $\alpha^2=0$, the projection does nothing. 

The Legendre transformation can now be computed to obtain equation such as the CH, mCH and the new CH-NLS equation. 
First, the $\lambda^2$ equation of \eqref{CH-sysperp} gives
\begin{align*}
	V^\perp = -\frac14 [A,\partial_x U^\perp],
\end{align*}
and then the $\lambda$ equation of \eqref{CH-syspara} yields
\begin{align*}
	\partial_xV^\| &= [\Lambda U^\perp,V^\perp]= -\frac14[\Lambda U^\perp, [A,\partial_x U^\perp]].
\end{align*}
This equation can only be weakly solved as 
\begin{align*}
	V^\|&= 	-\frac14\partial_x^{-1}([\Lambda U^\perp, [A,\partial_x U^\perp]]).
\end{align*}
$W^\perp$ is non-local  
\begin{align*}
	W^\perp = -\frac14 [A,\partial_xV^\perp] -\frac14 [A, [V^\|,\Lambda U^\perp] ]= -\frac14 \partial_x^2 U^\perp + \frac{1}{16} [A, [\partial_x^{-1}([\Lambda U^\perp, [A,\partial_x U^\perp]]),\Lambda U] ],
\end{align*}
as well as the parallel part of $W$, which reads 
\begin{align*}
	\partial_xW^\| &=[\Lambda U^\perp,W^\perp] =-\frac14 [\Lambda U^\perp,\partial_x^2 U^\perp] -\frac{1}{16}[\Lambda U^\perp,[A, [\partial_x^{-1}([\Lambda U^\perp,  [A,\partial_x U^\perp]]),\Lambda U] ]
].
\end{align*}

\subsubsection{CH and mCH equations}

In order to obtain the standard form of integrable wave equations we have to fix the form of $U$ and rescale the time $t\to 4 t$. 
For $U_{KdV}$ defined in \eqref{KdV} one can check that the CH equation is recovered
\begin{align}
	m_t+2mu_x+m_xu +u_{xxx} = 0,
	\label{CH}
\end{align}
with the $u_{xxx}$ dispersive term; see \cite{camassa1993,dullin2004asymptotically}. 
The corresponding weak ZCR after simplification of the complex numbers is given by
\begin{align}
	L = 
	\begin{bmatrix}
		\lambda  & m \\
		1 & -\lambda
	\end{bmatrix},\qquad 
	M = 
	\begin{bmatrix}
		\lambda^3	-\frac 12 \lambda u - \frac14 u_x & \lambda^2 \frac12 u_x - \frac12 mu + \frac14 u_{xx}\\
		\lambda^2- \frac12u & -\lambda^3 + \frac 12 \lambda u + \frac14 u_x
	\end{bmatrix}.
\end{align}
Notice that the term that we need to project out in order to recover the CH equation \eqref{CH} from the previous weak ZCR is 
\begin{align*}
	\lambda^2( [ A,\Lambda U] + [U,A]) + \lambda^3 [\Lambda U, U] = \alpha^2u_{xx}
	\begin{bmatrix}
		-\lambda^2  & 2\lambda^3 \\
		0 & \lambda^2 
	\end{bmatrix}.
\end{align*}
For $U_{mKdV}$, defined in \eqref{mKdV}, the dispersive mCH equation is obtained
\begin{align}
	m_t+ 2\sigma[m(u^2-\alpha^2u_x^2)]_x + u_{xxx}=0.
	\label{mCH}
\end{align}
In this case, $\mathcal P^A_{13}$ projects only $[\Lambda U^\perp,U^\perp] =0$, thus the weak ZCR has only a projection for the $\lambda^3$ term. The $M$ operator reads 
\begin{align*}
	M = 
	\begin{bmatrix}
		\lambda^3	-\frac 12 \lambda (u^2-\alpha^2u_x^2) & \lambda^2  u + \lambda \frac12 u_x- \frac12 m(u^2-\alpha^2u_x^2) + \frac14 u_{xx}\\
		\lambda^2u- \frac12\lambda u_x - \frac12 m (u^2-\alpha^2u_x^2) + \frac14 u_{xx} &-\lambda^3	+\frac 12 \lambda (u^2-\alpha^2u_x^2)
	\end{bmatrix}.
\end{align*}
Similarly to the CH equation, the term that we need to project out in order to recover the dispersive mCH equation \eqref{mCH} from the previous weak ZCR is 
\begin{align*}
	\lambda^2( [ A,\Lambda U] + [U,A]) + \lambda^3 [\Lambda U, U] = \alpha^2u_{xx}
	\begin{bmatrix}
		-\lambda^2  & 2\lambda^3 \\
		-2\lambda^3 & \lambda^2 
	\end{bmatrix}.
\end{align*}

The mCH equation is already known to be integrable with a linear dispersion see \cite{qiao2007new} for a recent derivation. 
We refer to \cite{qiao2007new,fokas1996asymptotic} and references therein for more details on this equation, we will not investigate it further in the present work.
The mCH equation with a third order dispersion term as we derived can easily be related to the linear dispersion by a change of variable $x\to x+ct$ for an appropriate value of $c$.
Finally, the general $U_{cKdV}$ defined in \eqref{CmKdV} gives the coupled mCH equations
\begin{align}
	\begin{split}
		m_t+2 [m(uv-\alpha^2u_xv_x)]_x -2m(uv_x-u_xv)+ u_{xxx}&=0\\
		n_t+2 [n(uv-\alpha^2u_xv_x)]_x +2n(uv_x-u_xv)+ v_{xxx}&=0. 
	\end{split}
	\label{CmCH}
\end{align}
These coupled equations have recently been found and studied by \cite{xia2015new}. 
The weak ZCR can be calculated but we will not display it here.

\subsubsection{Deformation of the NLS equation}

The deformation of the NLS equation, the first flow in the AKNS hierarchy, can be computed and will give a new weakly integrable equation that we will call the CH-NLS equation. 
Using the previous calculations of $M= \lambda^2 A + \lambda U^\perp + V$ and the NLS form of $U$
\begin{align}
	\Lambda U_{NLS}= 
	\begin{bmatrix}
		0 & m \\
		\sigma \overline m & 0 
	\end{bmatrix},
\end{align}
for complex valued $u$ and $\sigma=\pm 1$ for the focusing or defocussing case, we obtain the CH-NLS equation on the slice $N_{12}$ 
\begin{align}
	im_t +u_{xx}+ 2\sigma m(|u|^2- \alpha^2|u_x|^2)=0,\qquad m=u-\alpha^2u_{xx},\qquad \sigma=\pm 1.
	\label{CH-NLS}
\end{align}
In term of $u$ only it is given by
\begin{align}
	iu_t-i\alpha^2u_{xxt} +u_{xx}+ 2\sigma u|u|^2-2\sigma  \alpha^2u|u_x|^2-2\sigma\alpha^2u_{xx}|u|^2+2\sigma\alpha^4u_{xx}|u_x|^2=0
\end{align}
but is not an evolutionary equation for $u$, as all the other deformed equations.  
The weak ZCR of the CH-NLS equation is
\begin{align}
	L = 
	\begin{bmatrix}
		i\lambda  & m \\
		\overline m & -i\lambda
	\end{bmatrix},\qquad
	M = 
	\begin{bmatrix}
		i\lambda^2 + \frac i2 (|u|^2-\alpha^2|u_x|^2) & \lambda u -\frac i2 u_x\\
		\lambda \overline u + \frac i2 \overline u_x & -i\lambda^2 - \frac i2 (|u|^2-\alpha^2|u_x|^2)
	\end{bmatrix},
\end{align}
and, in contrary to the mCH equation, the projection with respect to the $\lambda^1$ term remains, because of the complex valued fields. 
This equation is also Hamiltonian with its Hamiltonian structure given by the non-canonical NLS Hamiltonian structure
\begin{align}
	J&=  
	\begin{bmatrix}
	 2\sigma  m \partial_x^{-1}m & \partial_x + 2\sigma  m \partial_x^{-1}\overline m\\
	\partial_x + 2\sigma  \overline m \partial_x^{-1}m& 2\sigma  \overline m \partial_x^{-1}\overline m		
	\end{bmatrix},
\end{align}
and its associated Hamiltonian 
\begin{align}
 	P &= i\int (\overline m u_x- m\overline u_x)dx.
\end{align}
The Hamiltonian has an interpretation of momentum for the field $m$, standard in the theory of the NLS equation. 
This Hamiltonian structure is actually the same as for the NLS equation and the modified definition of the momentum $P$ leads to the CH-NLS equation instead of the NLS equation. 
Note that the mass $M =  \int |m|^2dx $ is also conserved and could be associated to the $S^1$ symmetry of the CH-NLS equation. 
The interpretation of a mass and momentum for $M$ and $P$ is not clear, as the Hamiltonian structure $J$ does not produce the flow of space translations $m_t= m_x$ and phase shifts $m_t= im$. Notice that there is no Galilean symmetry for this equation.
This is linked to the non-integrability of the CH-NLS equation. 
Indeed, despite the weak integrability, this equation seems not to be completely integrable, as a second compatible Hamiltonian structure as well as its associated Hamiltonian are missing. 
For the NLS equation, the second Hamiltonian structure, which is a canonical Hamiltonian structure, would generate the symmetry associated to the mass $M$ and momentum $P$. 
We will not investigate this equation further here, but we refer to \cite{arnaudon2016deformation} for more details about this equation. 

\subsection{Deformation of $SO(3)$ hierarchy}

Following the same procedure as for the deformation of the AKNS hierarchy, we proceed with the $SO(3)$-hierarchy. 

\paragraph{First flow of the hierarchy}

The $L$ and $M$ element are
\begin{align}
	L= \lambda A + \Lambda U^\perp,\qquad M= \lambda^2 A + \lambda U^\perp + V
\end{align}
and the corresponding Euler-Poincar\'e equation is 
\begin{align}
	\begin{split}
\partial_t \Lambda U^\perp - \partial_x V^\perp +[\Lambda U^\perp,V^\|] =0\\
	-\partial_xU^\perp + [A,V^\perp] =0,\qquad -\partial_x V^\| + [\Lambda U^\perp, V^\perp] =0.
	\end{split}
\end{align}
Then the Legendre transformation gives 
\begin{align*}
	V^\perp&= -[\partial_xU^\perp, A],\qquad V^\|=  -\partial_x^{-1}[\Lambda U^\perp,[\partial_x U^\perp,A]]. 
\end{align*}
The deformation of SO3-NLS equation \eqref{SO3-NLS} then reads
\begin{align}
	im_t &= \overline u_{xx} + \frac12 \overline u \left ( u^2  -\alpha^2u_x^2 + \overline u^2-\alpha^2 \overline u_x^2\right).
	\label{CH-SO3-NLS}
\end{align}
Even if this equation seems to be new, the lack of $U(1)$ symmetry makes it less physically relevant we thus leave the analysis of this equation for future works.  

\paragraph{Second flow of the hierarchy}

For this flow, the $L$ and $M$ elements are
\begin{align}
	L= \lambda A + \Lambda U^\perp,\qquad L= \lambda^3 A + \lambda^2 U^\perp + \lambda V + W
\end{align}
and the associated Euler-Poincar\'e equation is
\begin{align}
	\begin{split}
\partial_t \Lambda U^\perp - \partial_x W^\perp +[U^\perp W^\|] &=0,\qquad  - \partial_x W^\| +[ \Lambda U^\perp W^\perp] =0\\
	-\partial_x V^\perp + [\Lambda U^\perp, V^\|] +[A,W^\perp] &=0,\qquad -\partial_x V^\| +[ \Lambda U^\perp V^\perp] =0\\
	-\partial_xU^\perp + [A, V^\perp] &=0. 
	\end{split}
\end{align}
After computing the Legendre transformation we obtain 
\begin{align*}
	\begin{split}
	V^\perp&= -[\partial_xU^\perp, A],\qquad V^\|= -\partial_x^{-1} [\Lambda U^\perp, [\partial_xU^\perp, A]]\\
	W^\perp& = \partial_x^2U^\perp +  [A, [\Lambda U^\perp,[\partial_x^{-1} (\Lambda U^\perp,[\partial_xU^\perp, A])]]],\qquad W^\| = \partial_x^{-1}[\Lambda  U^\perp, \partial_x^2U^\perp]. 
	\end{split}
\end{align*}
The equation for $\Lambda U= me_1+ne_2$ is then   
\begin{align}
	\begin{split}
	m_t  +[m( u_xm +v_xn)]_x  -(uv_x-vu_x)v+ u_{xxx}&=0\\
	n_t +[n( u_xm +v_xn)]_x+(uv_x-vu_x)u +  v_{xxx}  &=0.
	\end{split}
	\label{CmmCH}
\end{align}
This equation is similar to \eqref{CmCH} except for the third term; there might exist a transformation between the two. 
If one restricts the form of $U$ by setting $v=u$, the equation becomes
\begin{align}
	m_t + u_{xxx}  +2m^2 u_x+  m_x( u^2-\alpha^2 u_x^2)=0
\end{align}
which is the modified CH equation \eqref{mCH}. 
This result is compatible with the classical hierarchy which gave the modified KdV equation. 
As in the classical case, the difference in terms of the form of the equations between the two hierarchies arises when the full $U_{cKdV}$ element is considered. 

\subsection{Limiting case: $\alpha^2\to\infty$}

The limit $\alpha^2\to\infty$ is also interesting and corresponds to the high frequency limit when $\epsilon\to\infty$ in the change of variables $x\to \epsilon x, t\to \epsilon t$. 
For the Camassa-Holm equation it gives the Hunter-Saxton (HS) equation \cite{hunter1991dynamics,hunter1994completely} 
\begin{align}
	u_{tx} +uu_{xx} + \frac12 u_x^2 = 0.
	\label{HS}
\end{align}
This limit corresponds to modifying the Sobolev norm with its equivalent norm $\int u_x^2dx$. 
In the case of cubic equation, the high frequency limit which corresponds to this pairing has a different form, namely $x\to \epsilon x, t\to \epsilon^2t$. 
The limit of CH-NLS equation \eqref{CH-NLS} reads, after rescaling the time $t\to 2\sigma t$
\begin{align}
	iu_{xxt} =u_{xx}|u_x|^2.  
	\label{HS-NLS}
\end{align}
This equation seems to be new, but its analysis is beyond the scope of this work.  
The same limit in the mCH equation yields
\begin{align}
	u_{xxt} =u_{xx}u_x^2\qquad \mathrm{or}\qquad  u_{xt} =\frac13 u_x^3\qquad \mathrm{or}\qquad v_t= \frac14 v^3
	\label{mHS}
\end{align}
which is now a ordinary differential equation for $v=u_x$, thus not interesting for us here.  
The high frequency limit for the other equations will not be displayed here, but might be interesting for further studies. 

\section{Conclusion}

In this paper, we deformed the classical integrable system theory in order to derive nonlinear equations with nonlocal terms such as the Camassa-Holm equation. 
The central point of this deformation is in a systematic insertion of the Sobolev $H^1$ norm in hierarchies of integrable systems.
In order to achieve this goal in a systematic way, we replaced the standard $L^2$ norm by the Sobolev norm in a Lagrangian which should describe an integrable hierarchy.
In order to derive such a Lagrangian formulation of integrable systems, we first came back to the roots of integrability, written in term of multi-times and loop groups. 
Then, we took a slightly different direction by forgetting for a moment the time or space interpretation of the coordinates in the so-called multi-time space, associated with the different flows of the hierarchy. 
This step allowed us to reverse the choice of time and space in the construction of the hierarchy of equations and thus to interpret the usual zero curvature relation as an Euler-Poincar\'e equation. 
This Euler-Poincar\'e equation is rather special as it is written on the dual of a loop algebra and with an extra term coming from a derivative cocycle, responsible for the spacial derivatives in the resulting nonlinear equations. 
This exchange of time and space in the derivation of the hierarchy produced equations with more dependent fields, and thus led to coupled equations rather than a single equation.
These coupled equations can be simplified by expressing the extra fields in term of a single one (in the case $x=a_1$) and is understood as a Legendre transformation. 
This Legendre transformation then recovers the standard ZCR with its Hamiltonian Lie-Poisson interpretation.  
Notice that we treated here the simplest case of quadratic Lagrangians, written on semi-simple Lie algebras. 
Extensions such as non semisimple Lie algebras treated for example in \cite{ma2009variational}, or semi-direct products extensions, studied in the context of the CH2 equation by \cite{holm2010multi}, could be interesting directions to explore in this context. 

From this viewpoint, the best choice that we have for the inclusion of the $H^1$ norm is in the Lagrangian. 
Then the Legendre transform will give the usual notion for of nonlinear equations with a single field, and a notion of deformation of hierarchies of integrable systems. 
The obtained equations do not form a hierarchy by themselves as some of them will not be integrable. 
The complete integrability in term of the standard ZCR is thus altered for this deformation of hierarchies. 
Indeed, the classical ZCR is no more valid but a notion of projected ZCR can be defined such that the equivalence between the deformed PDE and the deformation of the ZCR is retained. 
These projected terms are always proportional to the highest powers in the spectral parameter $\lambda$, as well as to $\alpha^2$. 
From here, the link with an isospectral problem seems to be lost, but we suggested a possibility to incorporate this extra terms in the commutator of the ZCR, provided some extra conditions on the spectral problem hold. 
Even if most of the equations found in this work are already known, this deformation theory relates more closely the deformed equations with their classical limits given by $\alpha^2\to 0$.  
This systematic approach led us to a classification in term of deformed flows of deformed equations summarized in table \eqref{tab:eq-summary}.  
In this classification, a new equation called the CH-NLS equation \eqref{CHNLS} was found as a deformation of the NLS equation.
The CH-NLS equation could not yet be shown to be completely integrable and therefore raises the question of understanding the link between the present notion of weak integrability and the standard complete integrability. 
A final comment regarding the deformed equations is that there might be a possibility of using the KAM theory of PDEs for our deformed equations, such that the one described in \cite{kuksin2000analysis}. 
The main problem here will be to treat the nonlocalities arising from the use of the inverse Helmholtz operator, as we should view $m$ as the dynamical variable for applying this theory. 
A possibility to overcome this would be to expand the nonlocal terms up to some order in $\alpha$ and restricting the validity of the approximated equation for solutions with low wavenumbers compared to the scale given by $\alpha$ . 
Then, maybe the KAM theory could give some insights into the approximated equations, such as deriving approximated solutions for them. 

\subsection*{Acknowledgements} 

My first acknowledgement goes to the referees, who provided me with very detailed and thoughtful reports that greatly helped to improve this paper. 
I am grateful to M. Castrill\'on, T. Ratiu, J. Elgin, J, Hunter, A.N.W Hone, R. Ivanov and Y. Kodama for fruitful and thoughtful discussions during the course of this work and Z. Qiao for giving me \cite{xia2015new} before its publication. 
Special thanks goes to my advisor D. Holm, who was of great support and a nice guide throughout the discovery, the learning and the development of these theories.  
I gratefully acknowledge partial support from an Imperial College London Roth Award and from the European Research Council Advanced Grant 267382 FCCA.

\bibliographystyle{apalike}

\bibliography{Ar-paper1.bib}

\begin{thebibliography}{}

\bibitem[Ablowitz et~al., 1973]{ablowitz1973nonlinear}
Ablowitz, M.~J., Kaup, D.~J., Newell, A.~C., and Segur, H. (1973).
\newblock Nonlinear-evolution equations of physical significance.
\newblock {\em Phys. Rev. Lett.}, 31:125--127.

\bibitem[Ablowitz et~al., 1974]{ablowitz1974inverse}
Ablowitz, M.~J., Kaup, D.~J., Newell, A.~C., and Segur, H. (1974).
\newblock The inverse scattering transform-{F}ourier analysis for nonlinear
  problems.
\newblock {\em Studies in Appl. Math.}, 53(4):249--315.

\bibitem[Adler et~al., 2000]{adler2000symmetry}
Adler, V., Shabat, A.~B., and Yamilov, R.~I. (2000).
\newblock Symmetry approach to the integrability problem.
\newblock {\em Theoretical and Mathematical Physics}, 125(3):1603--1661.

\bibitem[Arnaudon, 2016]{arnaudon2016deformation}
Arnaudon, A. (2016).
\newblock On a deformation of the nonlinear schr{\"o}dinger equation.
\newblock {\em Journal of Physics A: Mathematical and Theoretical},
  49(12):125202.

\bibitem[B{\l}aszak and Szablikowski, 2009]{blaszak2009classical}
B{\l}aszak, M. and Szablikowski, B.~M. (2009).
\newblock Classical {$R$}-matrix theory for bi-{H}amiltonian field systems.
\newblock {\em J. Phys. A}, 42(40):404002, 35.

\bibitem[Camassa and Holm, 1993]{camassa1993}
Camassa, R. and Holm, D.~D. (1993).
\newblock An integrable shallow water equation with peaked solitons.
\newblock {\em Phys. Rev. Lett.}, 71(11):1661--1664.

\bibitem[Castrill{\'o}n~L{\'o}pez et~al., 2001]{lopez2001euler}
Castrill{\'o}n~L{\'o}pez, M., Garc{\'{\i}}a~P{\'e}rez, P.~L., and Ratiu, T.~S.
  (2001).
\newblock Euler-{P}oincar\'e reduction on principal bundles.
\newblock {\em Lett. Math. Phys.}, 58(2):167--180.

\bibitem[Caudrelier, 2015]{caudrelier2015multisymplectic}
Caudrelier, V. (2015).
\newblock Multisymplectic approach to integrable defects in the sine-gordon
  model.
\newblock {\em Journal of Physics A: Mathematical and Theoretical},
  48(19):195203.

\bibitem[Caudrelier and Kundu, 2015]{caudrelier2015multisymplecticb}
Caudrelier, V. and Kundu, A. (2015).
\newblock A multisymplectic approach to defects in integrable classical field
  theory.
\newblock {\em Journal of High Energy Physics}, 2015(2):1--20.

\bibitem[Colin and Lannes, 2009]{colin2009short}
Colin, M. and Lannes, D. (2009).
\newblock Short pulses approximations in dispersive media.
\newblock {\em SIAM J. Math. Anal.}, 41(2):708--732.

\bibitem[Constantin et~al., 2006]{constantin2006inverse}
Constantin, A., Gerdjikov, V.~S., and Ivanov, R.~I. (2006).
\newblock Inverse scattering transform for the {C}amassa-{H}olm equation.
\newblock {\em Inverse Problems}, 22(6):2197--2207.

\bibitem[Constantin and Lannes, 2009]{constantin2009hydrodynamical}
Constantin, A. and Lannes, D. (2009).
\newblock The hydrodynamical relevance of the {C}amassa-{H}olm and
  {D}egasperis-{P}rocesi equations.
\newblock {\em Arch. Ration. Mech. Anal.}, 192(1):165--186.

\bibitem[Date et~al., 1983]{date1982transformation}
Date, E., Kashiwara, M., Jimbo, M., and Miwa, T. (1983).
\newblock {\em Transformation groups for soliton equations}.
\newblock World Sci. Publishing, Singapore.

\bibitem[Dullin et~al., 2004]{dullin2004asymptotically}
Dullin, H.~R., Gottwald, G.~A., and Holm, D.~D. (2004).
\newblock On asymptotically equivalent shallow water wave equations.
\newblock {\em Phys. D}, 190(1-2):1--14.

\bibitem[Dumas et~al., 2016]{dumas2016variants}
Dumas, E., Lannes, D., and Szeftel, J. (2016).
\newblock Variants of the focusing nls equation: Derivation, justification, and
  open problems related to filamentation.
\newblock In {\em Laser Filamentation}, pages 19--75. Springer.

\bibitem[Ellis et~al., 2010]{ellis2009dynamics}
Ellis, D. C.~P., Gay-Balmaz, F., Holm, D.~D., Putkaradze, V., and Ratiu, T.~S.
  (2010).
\newblock Symmetry reduced dynamics of charged molecular strands.
\newblock {\em Arch. Ration. Mech. Anal.}, 197(3):811--902.

\bibitem[Flaschka et~al., 1983a]{flaschka1983kac}
Flaschka, H., Newell, A.~C., and Ratiu, T. (1983a).
\newblock Kac-{M}oody {L}ie algebras and soliton equations. {II}. {L}ax
  equations associated with {$A_{1}^{(1)}$}.
\newblock {\em Phys. D}, 9(3):300--323.

\bibitem[Flaschka et~al., 1983b]{flaschka1983kac2}
Flaschka, H., Newell, A.~C., and Ratiu, T. (1983b).
\newblock Kac-{M}oody {L}ie algebras and soliton equations. {III}. {S}tationary
  equations associated with {$A_{1}^{(1)}$}.
\newblock {\em Phys. D}, 9(3):324--332.

\bibitem[Fokas, 1995]{fokas1995class}
Fokas, A. (1995).
\newblock On a class of physically important integrable equations.
\newblock {\em Physica D: Nonlinear Phenomena}, 87(1):145--150.

\bibitem[Fokas and Liu, 1996]{fokas1996asymptotic}
Fokas, A.~S. and Liu, Q.~M. (1996).
\newblock Asymptotic integrability of water waves.
\newblock {\em Phys. Rev. Lett.}, 77(12):2347--2351.

\bibitem[Fuchssteiner, 1996]{fuchssteiner1996some}
Fuchssteiner, B. (1996).
\newblock Some tricks from the symmetry-toolbox for nonlinear equations:
  generalizations of the camassa-holm equation.
\newblock {\em Physica D: Nonlinear Phenomena}, 95(3):229--243.

\bibitem[Garc{\'{\i}}a-Naranjo and Vankerschaver, 2013]{garcia2013nonholonomic}
Garc{\'{\i}}a-Naranjo, L.~C. and Vankerschaver, J. (2013).
\newblock Nonholonomic {LL} systems on central extensions and the hydrodynamic
  {C}haplygin sleigh with circulation.
\newblock {\em J. Geom. Phys.}, 73:56--69.

\bibitem[Gay-Balmaz and Ratiu, 2009]{gay2009complex}
Gay-Balmaz, F. and Ratiu, T.~S. (2009).
\newblock The geometric structure of complex fluids.
\newblock {\em Adv. in Appl. Math.}, 42(2):176--275.

\bibitem[Gotay et~al., 1997]{gotay2004partI}
Gotay, M.~J., Isenberg, J., and Marsden, J.~E. (1997).
\newblock Momentum maps and classical relativistic fields. {P}art 1:
  {C}ovariant {F}ield {T}heory.
\newblock {\em arXiv preprint arXiv:9801019}.

\bibitem[Guha, 2007]{guha2007euler}
Guha, P. (2007).
\newblock Euler-{P}oincar\'e formalism of (two component)
  {D}egasperis-{P}rocesi and {H}olm-{S}taley type systems.
\newblock {\em Journal of Nonlinear Mathematical Physics}, 14(3):398--429.

\bibitem[Holm and Ivanov, 2010]{holm2010multi}
Holm, D.~D. and Ivanov, R.~I. (2010).
\newblock Multi-component generalizations of the {CH} equation: geometrical
  aspects, peakons and numerical examples.
\newblock {\em Journal of Physics A: Mathematical and Theoretical},
  43(49):492001.

\bibitem[Holm and Marsden, 2005]{holm2005momentum}
Holm, D.~D. and Marsden, J.~E. (2005).
\newblock {\em Momentum maps and measure-valued solutions (peakons, filaments,
  and sheets) for the {EPD}iff equation}, volume 232 of {\em Progr. Math.}
\newblock Birkh\"auser Boston, Boston, MA.

\bibitem[Hone and Wang, 2003]{hone2003prolongation}
Hone, A. N.~W. and Wang, J.~P. (2003).
\newblock Prolongation algebras and {H}amiltonian operators for peakon
  equations.
\newblock {\em Inverse Problems}, 19(1):129--145.

\bibitem[Hunter and Saxton, 1991]{hunter1991dynamics}
Hunter, J.~K. and Saxton, R. (1991).
\newblock Dynamics of director fields.
\newblock {\em SIAM J. Appl. Math.}, 51(6):1498--1521.

\bibitem[Hunter and Zheng, 1994]{hunter1994completely}
Hunter, J.~K. and Zheng, Y.~X. (1994).
\newblock On a completely integrable nonlinear hyperbolic variational equation.
\newblock {\em Phys. D}, 79(2-4):361--386.

\bibitem[Kaup and Newell, 1978]{kaup1978exact}
Kaup, D.~J. and Newell, A.~C. (1978).
\newblock An exact solution for a derivative nonlinear {S}chr{\"o}dinger
  equation.
\newblock {\em Journal of Mathematical Physics}, 19(4):798--801.

\bibitem[Kuksin, 2000]{kuksin2000analysis}
Kuksin, S.~B. (2000).
\newblock {\em Analysis of Hamiltonian PDEs Oxford Univ}.
\newblock Press.

\bibitem[Lenells, 2005]{lenells2005conservation}
Lenells, J. (2005).
\newblock Conservation laws of the {C}amassa-{H}olm equation.
\newblock {\em J. Phys. A}, 38(4):869--880.

\bibitem[Lenells and Fokas, 2009]{lenells2009novel}
Lenells, J. and Fokas, A.~S. (2009).
\newblock On a novel integrable generalization of the nonlinear {S}chr\"odinger
  equation.
\newblock {\em Nonlinearity}, 22(1):11--27.

\bibitem[Lobb and Nijhoff, 2009]{lobb2009lagrangian}
Lobb, S. and Nijhoff, F. (2009).
\newblock Lagrangian multiforms and multidimensional consistency.
\newblock {\em J. Phys. A}, 42(45):454013, 18.

\bibitem[Ma, 2009]{ma2009variational}
Ma, W.-X. (2009).
\newblock Variational identities and applications to hamiltonian structures of
  soliton equations.
\newblock {\em Nonlinear Analysis: Theory, Methods \& Applications},
  71(12):e1716 -- e1726.

\bibitem[Ma, 2013]{ma2013soliton}
Ma, W.-X. (2013).
\newblock A soliton hierarchy associated with $\rm{so}(3,\mathbb{ R})$.
\newblock {\em Applied Mathematics and Computation}, 220:117 -- 122.

\bibitem[Ma, 2014]{ma2014integrable}
Ma, W.-X. (2014).
\newblock An integrable counterpart of the {D}-{AKNS} soliton hierarchy from
  {${\rm so}(3,\mathbb{R})$}.
\newblock {\em Phys. Lett. A}, 378(24-25):1717--1720.

\bibitem[Marsden et~al., 2007]{marsden2006book}
Marsden, J.~E., Misio{\l}ek, G., Ortega, J.-P., Perlmutter, M., and Ratiu,
  T.~S. (2007).
\newblock {\em Hamiltonian reduction by stages}, volume 1913 of {\em Lecture
  Notes in Mathematics}.
\newblock Springer, Berlin.

\bibitem[Marsden and Ratiu, 1999]{marsden1999book}
Marsden, J.~E. and Ratiu, T.~S. (1999).
\newblock {\em Introduction to mechanics and symmetry}, volume~17 of {\em Texts
  in Applied Mathematics}.
\newblock Springer-Verlag, New York, second edition.
\newblock A basic exposition of classical mechanical systems.

\bibitem[Mikhailov and Novikov, 2002]{mikhailov2002perturbative}
Mikhailov, A.~V. and Novikov, V.~S. (2002).
\newblock Perturbative symmetry approach.
\newblock {\em Journal of Physics A: Mathematical and General}, 35(22):4775.

\bibitem[Misio{\l}ek, 1998]{misiolek1998shallow}
Misio{\l}ek, G. (1998).
\newblock A shallow water equation as a geodesic flow on the {B}ott-{V}irasoro
  group.
\newblock {\em J. Geom. Phys.}, 24(3):203--208.

\bibitem[Newell, 1985]{newell1985solitons}
Newell, A.~C. (1985).
\newblock {\em Solitons in mathematics and physics}, volume~48 of {\em CBMS-NSF
  Regional Conference series in Applied Mathematics}.
\newblock Society for Industrial and Applied Mathematics (SIAM), Philadelphia,
  PA.

\bibitem[Novikov, 2009]{novikov2009generalisations}
Novikov, V. (2009).
\newblock Generalizations of the {C}amassa-{H}olm equation.
\newblock {\em J. Phys. A}, 42(34):342002, 14.

\bibitem[Olver and Rosenau, 1996]{olver1996tri}
Olver, P.~J. and Rosenau, P. (1996).
\newblock Tri-hamiltonian duality between solitons and solitary-wave solutions
  having compact support.
\newblock {\em Physical Review E}, 53(2):1900.

\bibitem[Pressley and Segal, 1986]{pressley1986loop}
Pressley, A. and Segal, G. (1986).
\newblock {\em Loop groups}.
\newblock Oxford Mathematical Monographs. The Clarendon Press, Oxford
  University Press, New York.
\newblock Oxford Science Publications.

\bibitem[Qiao, 2007]{qiao2007new}
Qiao, Z. (2007).
\newblock New integrable hierarchy, its parametric solutions, cuspons, one-peak
  solitons, and {M}/{W}-shape peak solitons.
\newblock {\em J. Math. Phys.}, 48(8):082701, 20.

\bibitem[Schiff, 1998]{schiff1998camassa}
Schiff, J. (1998).
\newblock The {C}amassa-{H}olm equation: a loop group approach.
\newblock {\em Phys. D}, 121(1-2):24--43.

\bibitem[Scott et~al., 2006]{scott2006encyclopedia}
Scott, A. et~al. (2006).
\newblock {\em Encyclopedia of nonlinear science}.
\newblock Routledge.

\bibitem[Semenov-Tyan-Shanski{\u\i}, 1983]{semenov1983classical}
Semenov-Tyan-Shanski{\u\i}, M.~A. (1983).
\newblock What is a classical {$r$}-matrix?
\newblock {\em Funktsional. Anal. i Prilozhen.}, 17(4):17--33.

\bibitem[Suris, 2013]{suris2013variational}
Suris, Y.~B. (2013).
\newblock Variational formulation of commuting {H}amiltonian flows: multi-time
  {L}agrangian 1-forms.
\newblock {\em J. Geom. Mech.}, 5(3):365--379.

\bibitem[Tsuchida and Wadati, 1998]{tsuchida1998coupled}
Tsuchida, T. and Wadati, M. (1998).
\newblock The coupled modified {K}orteweg-de {V}ries equations.
\newblock {\em Journal of the Physical Society of Japan}, 67(4):1175--1187.

\bibitem[Xia and Qiao, 2015]{xia2015new}
Xia, B. and Qiao, Z. (2015).
\newblock A new two-component integrable system with peakon solutions.
\newblock {\em Proceedings of the Royal Society of London A: Mathematical,
  Physical and Engineering Sciences}, 471(2175).

\end{thebibliography}

\end{document}